\titleformat*{\section}{\Large\bfseries}
\titleformat*{\subsection}{\large\sc}
\titleformat*{\subsubsection}{\itshape}
\DeclarePairedDelimiter\ceil{\lceil}{\rceil}
\DeclarePairedDelimiter\floor{\lfloor}{\rfloor}
\begin{document}

\title{{\bf Evolution toward a Nash equilibrium}}

\author{\large{Ioannis Avramopoulos}}

\date{}

\maketitle

\thispagestyle{empty} 

\newtheorem{definition}{Definition}
\newtheorem{proposition}{Proposition}
\newtheorem{theorem}{Theorem}
\newtheorem*{theorem*}{Theorem}
\newtheorem{corollary}{Corollary}
\newtheorem{lemma}{Lemma}
\newtheorem{axiom}{Axiom}
\newtheorem{thesis}{Thesis}

\vspace*{-0.2truecm}

\begin{abstract}
In this paper, we study the dynamic behavior of Hedge, a well-known algorithm in theoretical machine learning and algorithmic game theory. The empirical average (arithmetic mean) of the iterates Hedge generates is known to converge to a minimax equilibrium in zero-sum games. We generalize that result to show convergence of the empirical average to Nash equilibrium in symmetric bimatrix games (that is bimatrix games where the payoff matrix of each player is the transpose of that of the other) in the sense that every limit point of the sequence of averages is an $\epsilon$-approximate symmetric equilibrium strategy for any desirable $\epsilon$. Our analysis gives rise to a symmetric equilibrium fully polynomial-time approximation scheme, implying {\bf P = PPAD}.
\end{abstract}

\section{Introduction}

\setlength{\epigraphwidth}{0.34\textwidth}
\epigraph{``You want forever, always or never.''}{--- \textup{The Pierces}}

{\em Game theory} is a mathematical discipline concerned with the study of algebraic, analytic, and other objects that abstract the physical world, especially social interactions. The most important {\em solution concept} in game theory is the {\em Nash equilibrium} \citep{Nash}, a strategy profile (combination of strategies) in an $N$-player game such that no unilateral player deviations are profitable. The Nash equilibrium is an attractive solution concept, for example, as Nash showed, an equilibrium is guaranteed to exist in any $N$-person game. Over time this concept has formed a basic cornerstone of {\em economic theory,} but its reach extends beyond economics to the {\em natural sciences} and {\em biology}.

One of the limitations of Nash equilibrium as a plausible solution concept is that we did not have an efficient algorithm for computing one. In fact, it has been conjectured that Nash equilibrium computation is intractable as it is complete for the complexity class {\bf PPAD} \citep{Daskalakis, CDT}. This class, introduced by \cite{PPAD}, contains a variety of related problems (such as computing Brouwer fixed points) that we didn't have efficient algorithms for. Thus, there was a gap between game theory and the theory of computing. In this paper, we take a step toward reconciling these disciplines using {\em dynamical systems theory} as an intermediate step. 

The notion of {\em equilibrium} admits various definitions in the mathematical sciences. One such standard definition is as a {\em fixed point} of a dynamical system. Research in dynamical systems is hardly content with identifying the fixed points of a (continuous) flow or a (discrete) map. What is ultimately important in this mathematical branch is to understand the evolution of system trajectories whether near fixed points (or near, for example, {\em limit cycles}) or globally. The Nash equilibrium can also be understood as a fixed point, for example, of the {\em best response correspondence} of a game (and other dynamics). But dynamical systems (as, for example, studied in theory of learning in games \citep{Fudenberg-Levine, Cesa-Bianchi} or evolutionary game theory \citep{Weibull, PopulationGames}) whose trajectories evolve toward a Nash equilibrium (generically, without restrictive assumptions on the payoff structure) eluded us. Since dynamical systems are algorithms, progress in this direction evidently informs algorithmic research.

There are classes of games where equilibrium computation was already known to be tractable: {\em Zero-sum games} are equivalent to linear programming and thus minimax equilibrium computation admits a polynomial-time algorithm. A bimatrix game is called {\em symmetric} if the payoff matrix of each player is the transpose of that of other. Every symmetric $N$-person game (and, thus every symmetric bimatrix game) admits a symmetric Nash equilibrium \citep{Nash2}.  If the payoff matrix of the symmetric game is also symmetric, then the game is called {\em doubly symmetric}. Symmetric equilibrium computation in doubly symmetric games admits a fully polynomial time approximation scheme (FPTAS) \citep{Ye}. In this paper, we consider the problem of computing a symmetric equilibrium in a symmetric bimatrix game (that is not necessarily doubly symmetric).

\cite{CDT} show that finding a Nash equilibrium in a $2$-person game is a {\bf PPAD}-complete problem and that an equilibrium FPTAS in these games (under either of ``additive" or ``multiplicative" notions of payoff approximation) implies {\bf P = PPAD}. \cite{Avramopoulos2}, drawing on \citep{Jurg}, shows that an FPTAS for a symmetric equilibrium in symmetric bimatrix games also implies that {\bf P=PPAD}. Thus approximating an equilibrium in the class of games we consider is conjectured to be a hard problem. In this paper, we refute that belief. Our result was motivated by an elementary question concerning {\em multiplicative weights dynamics}.

An important result at the intersection of theoretical machine learning and game theory is that the {\em empirical average} of the strategies generated by a well-known multiplicative weights algorithm, namely, Hedge \citep{FreundSchapire1, FreundSchapire2}, approaches the minimax strategy of a respective zero-sum game. In doubly symmetric bimatrix games, Hedge is a multiplicative version of {\em gradient ascent} (with an additional gradient exponentiation step). Gradient ascent is known to converge to critical points of nonlinear optimization problems (for example, see \citep[p. 48]{Bertsekas}) and symmetric equilibrium computation in a doubly symmetric game is a special case of a quadratic programming. It is natural to expect that by tuning Hedge's learning rate parameter, obtaining convergence in doubly symmetric games falls within the realm of possibilities of this algorithm.

But every symmetric bimatrix game is the sum of a doubly symmetric game and a symmetric zero-sum game in the sense that every (payoff) matrix $C$ can be decomposed as follows: 
\begin{align*}
C = \frac{1}{2} (C+C^T) + \frac{1}{2} (C-C^T).
\end{align*}
Since the iterates of Hedge are expected to converge to a symmetric equilibrium in a doubly symmetric game (which would imply that the empirical average of the iterates would converge likewise) and the empirical average of the iterates is known to converge to such an equilibrium in symmetric zero-sum games,
is it then possible that the empirical average of iterates converges to a symmetric equilibrium in the general class of symmetric bimatrix games? Our answer is affirmative.

\subsection{Our results and techniques}

We show in Theorem \ref{asymptotic_convergence_theorem_3} that under a fixed learning rate parameter $\alpha$ every limit point of the sequence of empirical averages is an $\epsilon$-approximate symmetric equilibrium strategy for any desirable $\epsilon$. We achieve that by tuning parameter $\alpha$. A corollary is that a symmetric equilibrium always exists in symmetric bimatrix games. We may thus obtain alternative (constructive) proofs of existence of equilibria and fixed points in a variety of related problems (such as $N$-person games). Our analysis gives an equilibrium fully polynomial time approximation scheme, which implies {\bf P=PPAD}. 

The basis of our analysis is a formula for the equilibrium approximation error (cf. Lemma \ref{basis_lemma_3}), which we derive directly from the equation defining Hedge. We pierce through this formula in Lemma \ref{LA} using an inductive proof that makes use of the ``multiplicative weights convexity lemma.'' (Lemma \ref{LA} was the hardest part of the analysis required to obtain equilibrium approximation bounds.) The multiplicative weights convexity lemma states that the composition of the relative entropy function with Hedge is a convex function of $\alpha$ a result appears for the first time in \citep{Avramopoulos1}.\footnote{In this manuscript, I erroneously believed to have shown {\bf P = PPAD}. The error is in Lemma 10.} Using this lemma we obtain in Lemma \ref{LA} a key recursive relationship using logarithms that we make use of in our inductive proof. Our use of the logarithmic function is related to a proof that the time average of the trajectory of the replicator dynamic in a symmetric bimatrix game with an interior equilibrium converges to that equilibrium \citep{Wolff} (see also \citep[p. 91]{Weibull}). 
 That the relative entropy function (also known as Kullback-Leibler divergence) facilitates analyzing multiplicative weights (and its continuous approximation, namely, the {\em replicator dynamic}) is well known (for example, see \citep{Weibull, FreundSchapire2}).

\subsection{Other related work}

There is a significant amount of work on Nash equilibrium computation especially in the setting of $2$-player games. We simply mention a boundary of those results. The Lemke-Howson algorithm for computing an equilibrium in a bimatrix game is considered by many to be the state of the art in exact equilibrium computation but it has been shown to run in exponential time in the worst case \citep{Savani-Stengel}. There is a quasi-polynomial algorithm for additively approximate Nash equilibria in bimatrix games due to \cite{LMM} (based on looking for equilibria of small support on a grid). Prior to our work, the best polynomial-time approximation algorithm for a Nash equilibrium achieves a 0.3393 approximation \citep{Tsaknakis-Spirakis-journal}.

\subsection{Outline of the rest of the paper}

Section \ref{preliminaries} starts off with preliminaries in equilibrium computation discussing bimatrix and symmetric bimatrix games, defining Hedge, and proving a key relationship used in our main results as an implication of the multiplicative weights convexity lemma. Section \ref{fixed_rate_convergence} contains our main results on asymptotic convergence to an equilibrium as well as on equilibrium approximation. Our main result that {\bf P = PPAD} is obtained via an equilibrium fully polynomial-time approximation scheme.

\section{Equilibrium computation background and preliminary results}
\label{preliminaries}

\subsection{Symmetric bimatrix games}

A $2$-player (bimatrix) game in normal form is specified by a pair of $n \times m$ matrices $A$ and $B$, the former corresponding to the {\em row player} and the latter to the {\em column player}. A {\em mixed strategy} for the row player is a probability vector $P \in \mathbb{R}^n$ and a mixed strategy for the column player is a probability vector $Q \in \mathbb{R}^m$. The {\em payoff} to the row player of $P$ against $Q$ is $P \cdot A Q$ and that to the column player is $P \cdot B Q$. Let us denote the space of probability vectors for the row player by $\mathbb{P}$ and the corresponding space for the column player by $\mathbb{Q}$. A Nash equilibrium of a $2$-player game $(A, B)$ is a pair of mixed strategies $P^*$ and $Q^*$ such that all unilateral deviations from these strategies are not profitable, that is, for all $P \in \mathbb{P}$ and $Q \in \mathbb{Q}$, we simultaneously have that
\begin{align}
P^* \cdot AQ^* &\geq P \cdot AQ^*\label{eqone}\\
P^* \cdot BQ^* &\geq P^* \cdot BQ.\label{eqtwo}
\end{align}
If $B = A^T$, where $A^T$ is the transpose, $(A, B)$ is called {\em symmetric}. Let $(C, C^T)$ be a symmetric bimatrix game. We call an equilibrium $(P^*, Q^*)$  symmetric if $P^* = Q^*$. If $(X^*, X^*)$ is a symmetric equilibrium, we call $X^*$ a symmetric equilibrium strategy. $C$ denotes the symmetric game $(C, C^T)$.

\subsubsection{Some further notation}

We denote the space of symmetric bimatrix games by $\mathbb{C}$. If the payoff entries lie in the range $[0, 1]$, we denote the corresponding space by $\mathbb{\hat{C}}$. Given $C \in \mathbb{C}$, we denote the corresponding set of pure strategies by $\mathcal{K}(C) = \{1, \ldots, n\}$. Pure strategies are denoted either as $i$ or as $E_i$, a probability vector whose mass is concentrated in position $i$. $\mathbb{X}(C)$ is the probability simplex (space of mixed strategies) corresponding to $C \in \mathbb{C}$. We denote the (relative) interior of $\mathbb{X}(C)$ by $\mathbb{\mathring{X}}(C)$ (every pure strategy in $\mathbb{\mathring{X}}(C)$ has probability mass). Let $X \in \mathbb{X}(C)$. We define the {\em support} or {\em carrier} of $X$ by
\begin{align*}
\mathcal{C}(X) \equiv \{ i \in \mathcal{K}(C) | X(i) > 0\}.
\end{align*}

\subsubsection{Approximate equilibria}

Conditions \eqref{eqone} and \eqref{eqtwo} simplify as follows for a symmetric equilibrium strategy $X^*$:
\begin{align*}
\forall X \in \mathbb{X}(C) : (X^* - X) \cdot CX^* \geq 0.
\end{align*}
An $\epsilon$-approximate symmetric equilibrium satisfies:
\begin{align*}
\forall X \in \mathbb{X}(C) : (X^* - X) \cdot CX^* \geq -\epsilon.
\end{align*}
We may equivalently write the previous expression as
\begin{align*}
(CX^*)_{\max} - X^* \cdot CX^* \leq \epsilon,
\end{align*}
where
\begin{align*}
(CX^*)_{\max} = \max\{ Y \cdot CX^* | Y \in \mathbb{X}(C) \}.
\end{align*}
If $Y^+ \cdot CX^* = (CX^*)_{\max}$, then $Y^+$ is called a best response to $X^*$.

\subsection{The convexity lemma of multiplicative weights and implications}
\label{convexity_lemma_section}

Hedge \citep{FreundSchapire1, FreundSchapire2} induces the following map in our setting:
\begin{align}
T_i(X) = X(i) \cdot \frac{\exp\left\{ \alpha E_i \cdot CX \right\}}{ \sum_{j=1}^n X(j) \exp \left\{ \alpha E_j \cdot CX \right\} }, \quad i = 1, \ldots, n,\label{main_exp}
\end{align}
where $C$ is the payoff matrix of a symmetric bimatrix game, $n$ is the number of pure strategies, $E_i$ is the probability vector corresponding to pure strategy $i$, and $X(i)$ is the probability mass of pure strategy $i$. Parameter $\alpha$ is called the {\em learning rate}, which has the role of a {\em step size} in our equilibrium computation setting. In this paper, we do not study the behavior of the iterates that $T$ generates per se but instead the sequence $\left\{ \bar{X}^K \right\}_{K = 0}^{\infty}$ of empirical averages of the iterates that $T$ generates starting from an interior to the probability simplex strategy. The empirical average $\bar{X}^K$ at iteration $K = 0, 1, 2, \ldots$ is a simple arithmetic mean
\begin{align*}
\bar{X}^K = \frac{1}{K+1} \sum_{k=0}^K X^k.
\end{align*}

\subsubsection{The multiplicative weights convexity lemma}

Let us now give some preliminary results on Hedge dynamics. Part of our analysis of Hedge relies on the relative entropy function between probability distributions (also called {\em Kullback-Leibler divergence}). The relative entropy between the $n \times 1$ probability vectors $P > 0$ (that is, for all $i = 1, \ldots, n$, $P(i) > 0$) and $Q > 0$ is given by 
\begin{align*}
RE(P, Q) \doteq \sum_{i=1}^n P(i) \ln \frac{P(i)}{Q(i)}.
\end{align*}
However, this definition can be relaxed: The relative entropy between $n \times 1$ probability vectors $P$ and $Q$ such that, given $P$, for all $Q \in \{ \mathcal{Q} \in \mathbb{X} | \mathcal{C}(P) \subset \mathcal{C}(\mathcal{Q}) \}$, is
\begin{align*}
RE(P, Q) \doteq \sum_{i \in \mathcal{C}(P)} P(i) \ln \frac{P(i)}{Q(i)}.
\end{align*}
We refer to  \cite[p.96]{Weibull} from well-known properties of the relative entropy function. With this background in mind, we state the following lemma (which we refer to as the multiplicative weights convexity lemma) generalizing \cite[Lemma 2]{FreundSchapire2}.

\begin{lemma}[\citep{Avramopoulos2}]
\label{convexity_lemma}
Let $T$ be as in \eqref{main_exp}. Then
\begin{align*}
\forall X \in \mathbb{\mathring{X}}(C) \mbox{ } \forall Y \in \mathbb{X}(C) : RE(Y, T(X)) \mbox{ is a convex function of }\alpha.
\end{align*}
\end{lemma}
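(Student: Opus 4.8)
The plan is to write $RE(Y, T(X))$ out explicitly as a function of $\alpha$ and exhibit it as the sum of an affine function of $\alpha$ and the logarithm of a moment generating function, the latter being convex by a standard argument.

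Fix $X \in \mathbb{\mathring{X}}(C)$ and $Y \in \mathbb{X}(C)$. First I would check that $RE(Y, T(X))$ is well defined: since $X$ is interior, every exponential appearing in \eqref{main_exp} is positive, so $T_i(X) > 0$ for every $i$, i.e. $\mathcal{C}(T(X)) = \mathcal{K}(C) \supseteq \mathcal{C}(Y)$, which is precisely the condition under which the (relaxed) relative entropy is defined. Now abbreviate $c_i \equiv E_i \cdot CX$ --- these are constants, independent of $\alpha$ --- and write $Z(\alpha) \equiv \sum_{j=1}^n X(j) \exp\{\alpha c_j\}$ for the normalizing denominator in \eqref{main_exp}. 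Taking logarithms in \eqref{main_exp} gives $\ln T_i(X) = \ln X(i) + \alpha c_i - \ln Z(\alpha)$. Substituting this into $RE(Y, T(X)) = \sum_{i \in \mathcal{C}(Y)} Y(i) \ln\!\big(Y(i)/T_i(X)\big)$ and using $\sum_{i \in \mathcal{C}(Y)} Y(i) = 1$ (so the $\ln Z(\alpha)$ contribution factors out) together with $\sum_i Y(i) c_i = Y \cdot CX$, one obtains the decomposition
\begin{align*}
RE(Y, T(X)) = RE(Y, X) - \alpha\,(Y \cdot CX) + \ln Z(\alpha).
\end{align*}
Here $RE(Y, X)$ does not depend on $\alpha$ and $-\alpha(Y \cdot CX)$ is linear in $\alpha$, so their sum is affine and hence convex; it remains only to prove that $\alpha \mapsto \ln Z(\alpha)$ is convex.

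This last step is the only genuinely nontrivial point; it is the cumulant generating function inequality, and two routes work equally well. One option is to differentiate twice: writing $p_j(\alpha) = X(j) e^{\alpha c_j}/Z(\alpha)$, which is a probability distribution over $\{1,\dots,n\}$, a short computation gives $(\ln Z)''(\alpha) = \big(\sum_j p_j(\alpha) c_j^2\big) - \big(\sum_j p_j(\alpha) c_j\big)^2$, which is a variance and hence nonnegative. Alternatively, for $\lambda \in [0,1]$ one applies H\"older's inequality with exponents $1/\lambda$ and $1/(1-\lambda)$ to $\sum_j \big(X(j) e^{\alpha_1 c_j}\big)^{\lambda}\big(X(j) e^{\alpha_2 c_j}\big)^{1-\lambda}$ to get $Z(\lambda\alpha_1 + (1-\lambda)\alpha_2) \le Z(\alpha_1)^{\lambda} Z(\alpha_2)^{1-\lambda}$ and takes logarithms. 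Either way $\ln Z$ is convex, and since a sum of convex functions is convex, $RE(Y, T(X))$ is a convex function of $\alpha$. The main obstacle is conceptual rather than computational: the key is to recognize that all of the $\alpha$-dependence beyond the obvious linear term is packaged into the log-partition function $\ln Z(\alpha)$, after which convexity is immediate from the standard log-MGF argument; the only other care needed is the domain check confirming that the relative entropy is well defined.
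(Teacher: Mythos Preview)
Your argument is correct. The decomposition
\[
RE(Y, T(X)) = RE(Y, X) - \alpha\,(Y \cdot CX) + \ln Z(\alpha)
\]
is exactly right, and the convexity of $\alpha \mapsto \ln Z(\alpha)$ via the variance identity $(\ln Z)''(\alpha) = \operatorname{Var}_{p(\alpha)}(c) \ge 0$ (or equivalently H\"older) is the standard and complete justification. The domain check that $T(X) \in \mathbb{\mathring{X}}(C)$ whenever $X \in \mathbb{\mathring{X}}(C)$ is also handled correctly.

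As for comparison with the paper: this lemma is not actually proved in the paper --- it is merely stated with a citation to \citep{Avramopoulos2} --- so there is no in-paper argument to compare against. Your proof is the natural one and is essentially the only reasonable route: once one writes out $RE(Y,T(X))$ explicitly, all $\alpha$-dependence beyond the linear term is forced into the log-partition function, and its convexity is textbook. There is nothing missing.
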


The next lemma is shown in \citep{Avramopoulos2}. We repeat the proof for completeness. In the proof, we use the following ``secant inequality'' for a convex function $F(\cdot)$ and its derivative $F'(\cdot)$:
\begin{align}
\forall \mbox{ } b > a : F'(a) \leq \frac{F(b) - F(a)}{b - a} \leq F'(b).\label{secant_inequality}
\end{align}

\begin{lemma}
\label{convexity_lemma_normalized}
Let $C \in \mathbb{\hat{C}}$. Then, for all $Y \in \mathbb{X}(C)$ and for all $X \in \mathbb{\mathring{X}}(C)$, we have that
\begin{align*}
\forall \alpha > 0 : RE(Y, T(X)) \leq RE(Y, X) - \alpha (Y-X) \cdot CX + \alpha (\exp\{\alpha\} - 1).
\end{align*}
\end{lemma}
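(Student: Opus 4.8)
The plan is to combine the multiplicative weights convexity lemma (Lemma~\ref{convexity_lemma}) with the secant inequality~\eqref{secant_inequality}, exactly as suggested by the remark preceding the statement. Fix $Y \in \mathbb{X}(C)$ and $X \in \mathbb{\mathring{X}}(C)$ and view $F(\alpha) \doteq RE(Y, T(X))$ as a function of $\alpha$ alone; because $X$ is interior, $T(X)$ is interior for every $\alpha$, so $F$ is well defined and smooth, and $F(0) = RE(Y, X)$ since $T(X) = X$ when $\alpha = 0$. By Lemma~\ref{convexity_lemma}, $F$ is convex, so applying~\eqref{secant_inequality} with $a = 0$ and $b = \alpha > 0$ gives $F(\alpha) - F(0) \leq \alpha F'(\alpha)$, i.e.
\begin{align*}
RE(Y, T(X)) \leq RE(Y, X) + \alpha F'(\alpha).
\end{align*}
It therefore suffices to prove $\alpha F'(\alpha) \leq -\alpha (Y - X) \cdot CX + \alpha(\exp\{\alpha\} - 1)$.

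For this I would first put $F$ in closed form: taking logarithms in~\eqref{main_exp} and substituting into the definition of $RE$ gives
\begin{align*}
F(\alpha) = RE(Y, X) - \alpha \, Y \cdot CX + \ln \sum_{j=1}^n X(j) \exp\{\alpha E_j \cdot CX\},
\end{align*}
hence
\begin{align*}
F'(\alpha) = -Y \cdot CX + \frac{\sum_{j=1}^n X(j) \, (E_j \cdot CX) \, \exp\{\alpha E_j \cdot CX\}}{\sum_{j=1}^n X(j) \exp\{\alpha E_j \cdot CX\}}
\end{align*}
(note $F'(0) = -(Y-X)\cdot CX$, matching the leading term). Writing $c_j \doteq E_j \cdot CX$, the hypothesis $C \in \mathbb{\hat{C}}$ together with $X$ being a probability vector gives $c_j \in [0, 1]$ for all $j$. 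Then convexity of $t \mapsto \exp\{\alpha t\}$ on $[0,1]$ yields $\exp\{\alpha c_j\} \leq 1 + (\exp\{\alpha\} - 1) c_j$, so the numerator of the ratio is at most $\sum_j X(j) c_j + (\exp\{\alpha\}-1) \sum_j X(j) c_j^2 = X \cdot CX + (\exp\{\alpha\} - 1)\sum_j X(j) c_j^2$, while the denominator is at least $\sum_j X(j) = 1$ because $\exp\{\alpha c_j\} \geq 1$. Since $c_j^2 \leq 1$, the ratio is at most $X \cdot CX + \exp\{\alpha\} - 1$, and multiplying by $\alpha$ gives $\alpha F'(\alpha) \leq -\alpha (Y - X) \cdot CX + \alpha(\exp\{\alpha\} - 1)$, which with the displayed bound above finishes the proof.

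The argument is almost entirely routine; the one point that needs care is to invoke convexity via the secant inequality with the slope taken at $b = \alpha$ (not at $a = 0$), since it is precisely the gap $F'(\alpha) - F'(0)$ that contributes the second-order error term $\exp\{\alpha\} - 1$ while leaving the first-order ``step-size'' term $-(Y-X)\cdot CX$ untouched. The normalization $C \in \mathbb{\hat{C}}$ is used only to guarantee $E_j \cdot CX \in [0,1]$, which is what makes the elementary exponential estimate available; no convexity input beyond Lemma~\ref{convexity_lemma} is required.
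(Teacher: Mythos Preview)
Your proof is correct and follows essentially the same approach as the paper's: convexity plus the secant inequality to reduce to bounding $F'(\alpha)$, then the elementary estimate $\exp\{\alpha x\} \leq 1 + (\exp\{\alpha\}-1)x$ on the numerator together with $c_j^2 \leq 1$. The only difference is that the paper first applies Jensen's inequality to lower-bound the denominator by $\exp\{\alpha\, X\cdot CX\}$ and then bounds that by $1$, whereas you bound the denominator directly by $\sum_j X(j)=1$; your route is slightly more streamlined but otherwise identical.
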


\begin{proof}
Since, by Lemma \ref{convexity_lemma}, $RE(Y, T(X)) - RE(Y, X)$ is a convex function of $\alpha$, we have by the aforementioned secant inequality that, for $\alpha > 0$,
\begin{align}
RE(Y, T(X)) - RE(Y, X) \leq \alpha \left( RE(Y, T(X)) - RE(Y, X) \right)' = \alpha \cdot \frac{d}{d \alpha} RE(Y, T(X))\label{ooone}
\end{align}
where it can be readily computed that
\begin{align*}
\frac{d}{d \alpha} RE(Y, T(X)) = \frac{\sum_{j = 1}^n X(j) (CX)_j \exp\{ \alpha (CX)_j \}}{\sum_{j = 1}^n X(j) \exp\{ \alpha (CX)_j \}} - Y \cdot CX.
\end{align*}
Using Jensen's inequality in the previous expression, we obtain 
\begin{align}
\frac{d}{d \alpha} RE(Y, T(X)) \leq \frac{\sum_{j = 1}^n X(j) (CX)_j \exp\{ \alpha (CX)_j \}}{\exp\{ \alpha X \cdot CX \}} - Y \cdot CX.\label{vbvbvb}
\end{align}
Note now that
\begin{align*}
\exp\{ \alpha x \} \leq 1 + (\exp\{ \alpha \} - 1) x, x \in [0, 1],
\end{align*}
an inequality used in \citep[Lemma 2]{FreundSchapire2}. Using the latter inequality, we obtain from \eqref{vbvbvb} that
\begin{align*}
\frac{d}{d \alpha} RE(Y, T(X)) \leq \frac{X \cdot CX}{\exp\{ \alpha X \cdot CX \}} - Y \cdot CX + (\exp\{ \alpha \} - 1) \frac{\sum_{j=1}^n X(j) (CX)_j^2}{\exp\{ \alpha X \cdot CX \}}
\end{align*}
and since $\exp\{\alpha X \cdot CX\} \geq 1$ again by the assumption that $C \in \mathbb{\hat{C}}$, we have
\begin{align*}
\frac{d}{d \alpha} RE(Y, T(X)) \leq X \cdot CX - Y \cdot CX + (\exp\{\alpha \} - 1)  \sum_{j=1}^n X(j) (CX)_j^2.
\end{align*}
Noting that $\sum X(j) (CX)_j^2 \leq 1$ and combining with \eqref{ooone} yields the lemma.
\end{proof}

As a corollary to the previous lemma, we obtain the following lemma:
\begin{lemma}
\label{very_useful_lemma}
Let $C \in \mathbb{\hat{C}}$. Then, for all $i \in \{1, \ldots, n\}$ and for all $X \in \mathbb{\mathring{X}}(C)$, we have that
\begin{align*}
\forall \alpha > 0 : \ln(T_i(X)) \geq \ln(X(i)) + \alpha (E_i-X) \cdot CX - \alpha (\exp\{\alpha\} - 1).
\end{align*}
\end{lemma}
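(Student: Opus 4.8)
The plan is to obtain this as an immediate specialization of Lemma \ref{convexity_lemma_normalized}, taking the fixed comparison strategy $Y$ to be the pure strategy $E_i$. The observation that makes this work is that relative entropy against a vertex of the simplex collapses to a single logarithm: for any $Z \in \mathbb{\mathring{X}}(C)$ we have $RE(E_i, Z) = \sum_{j \in \mathcal{C}(E_i)} E_i(j) \ln \frac{E_i(j)}{Z(j)} = \ln \frac{1}{Z(i)} = -\ln(Z(i))$, since $\mathcal{C}(E_i) = \{i\}$ and $E_i(i) = 1$.

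First I would check that Lemma \ref{convexity_lemma_normalized} is applicable with $Y = E_i$: it requires $Y \in \mathbb{X}(C)$, which holds trivially, and $X \in \mathbb{\mathring{X}}(C)$, which is our hypothesis. I should also note that the relaxed relative entropy $RE(E_i, T(X))$ is well defined, i.e. $\mathcal{C}(E_i) \subset \mathcal{C}(T(X))$: this is because $X$ interior forces every coordinate of $T(X)$ to be strictly positive (each numerator $X(i)\exp\{\alpha E_i \cdot CX\}$ in \eqref{main_exp} is positive), so in fact $T(X) \in \mathbb{\mathring{X}}(C)$ and $\mathcal{C}(T(X)) = \mathcal{K}(C)$.

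Then I would simply substitute. Lemma \ref{convexity_lemma_normalized} with $Y = E_i$ reads
\begin{align*}
-\ln(T_i(X)) = RE(E_i, T(X)) \leq RE(E_i, X) - \alpha(E_i - X)\cdot CX + \alpha(\exp\{\alpha\} - 1) = -\ln(X(i)) - \alpha(E_i - X)\cdot CX + \alpha(\exp\{\alpha\} - 1).
\end{align*}
Multiplying through by $-1$ and reversing the inequality gives exactly the claimed bound, for every $\alpha > 0$ and every $i \in \{1,\ldots,n\}$.

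There is essentially no obstacle here; the only points requiring a word of care are the bookkeeping just mentioned (legitimacy of the relaxed $RE$ and positivity of $T(X)$), and recording that the identity $RE(E_i, Z) = -\ln Z(i)$ is the reason a statement about relative entropy of arbitrary $Y$ specializes to a statement about the individual coordinates $\ln T_i(X)$ — which is precisely the form needed later for the logarithmic, coordinatewise recursion used in the inductive argument.
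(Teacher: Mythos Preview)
Your proposal is correct and follows exactly the paper's own approach: specialize Lemma~\ref{convexity_lemma_normalized} to $Y = E_i$ and use the identity $RE(E_i, Z) = -\ln Z(i)$. The extra bookkeeping you include (interiority of $T(X)$, well-definedness of the relaxed relative entropy) is sound and simply makes explicit what the paper leaves implicit.
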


\begin{proof}
This lemma is a simple implication of Lemma \ref{convexity_lemma_normalized} noting that $RE(E_i, X) = - \ln(X(i))$.
\end{proof}

We note that \eqref{ooone} can be obtained from Slater's inequality as follows: Let $\hat{X} \equiv T(X)$. Then
\begin{align*}
\frac{\hat{X}(i)}{X(i)} = \frac{\exp\{ \alpha (CX)_i \}}{\sum_{j=1}^n X(j) \exp\{ \alpha (CX)_j \}}.
\end{align*}
Slater's inequality (cf. \citep{Dragomir}) gives
\begin{align*}
\sum_{j=1}^n X (j) \exp\{\alpha (CX)_j\} \leq \exp\left\{ \alpha \frac{\sum_{j=1}^n X(j) (CX)_j \exp\{\alpha (CX)_j\}}{\sum_{j=1}^n X(j) \exp\{\alpha (CX)_j\}} \right\}.
\end{align*}
Combining the previous inequalities, we obtain
\begin{align*}
\frac{\hat{X}(i)}{X(i)} \geq \exp\left\{ \alpha \left( (CX)_i - \frac{\sum_{j=1}^n X(j) (CX)_j \exp\{\alpha (CX)_j\}}{\sum_{j=1}^n X(j) \exp\{\alpha (CX)_j\}} \right) \right\}.
\end{align*}
Taking logarithms
\begin{align*}
\ln(\hat{X}(i)) - \ln(X(i)) \geq \alpha \left( (CX)_i - \frac{\sum_{j=1}^n X(j) (CX)_j \exp\{\alpha (CX)_j\}}{\sum_{j=1}^n X(j) \exp\{\alpha (CX)_j\}} \right).
\end{align*}
Multiplying both sides with $Y(i)$ and summing over $i=1, \ldots, n$
\begin{align*}
RE(Y, X) - RE(Y, \hat{X}) \geq \alpha \left( Y \cdot CX - \frac{\sum_{j=1}^n X(j) (CX)_j \exp\{\alpha (CX)_j\}}{\sum_{j=1}^n X(j) \exp\{\alpha (CX)_j\}} \right).
\end{align*}
Rearranging
\begin{align*}
RE(Y, \hat{X}) - RE(Y, X) \leq \alpha \left( \frac{\sum_{j=1}^n X(j) (CX)_j \exp\{\alpha (CX)_j\}}{\sum_{j=1}^n X(j) \exp\{\alpha (CX)_j\}} - Y \cdot CX \right),
\end{align*}
and, thus,
\begin{align*}
RE(Y, \hat{X}) - RE(Y, X) \leq \alpha \frac{d}{d \alpha} RE(Y, T(X)).
\end{align*}

\section{Convergence under a fixed learning rate and P=PPAD}
\label{fixed_rate_convergence}

\subsection{Our asymptotic convergence result}

\begin{theorem}
\label{asymptotic_convergence_theorem_3}
Let $C \in \mathbb{\hat{C}}$ and $X^k \equiv T^k(X^0)$, where $X^0 \in \mathbb{\mathring{X}}(C)$ is the uniform distribution. Assume the learning rate $\alpha > 0$ is constant. Then considering the sequence of empirical averages
\begin{align*}
\left\{ \frac{1}{K+1} \sum_{k=0}^K X^k \equiv \bar{X}^K \right\}_{K=0}^{\infty},
\end{align*}
every limit point of this sequence is an $n(\exp\{\alpha\}-1)$-approximate symmetric equilibrium strategy of $C$, where $n$ is the number of pure strategies of $C$.
\end{theorem}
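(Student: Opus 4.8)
\noindent\textit{Proposed proof.}\quad The plan is to first read off an exact expression for the equilibrium approximation error of $\bar{X}^K$ directly from the logarithm of the Hedge map \eqref{main_exp}, and then to bound that expression using Lemma~\ref{convexity_lemma}. Write $Z^k \equiv \ln \sum_{j=1}^n X^k(j)\exp\{\alpha (CX^k)_j\}$ for the log-normalizer, so that \eqref{main_exp} reads $\ln X^{k+1}(i) - \ln X^k(i) = \alpha (CX^k)_i - Z^k$ for every pure strategy $i$. Summing over $k = 0,\ldots,K$ and using that $X^0$ is uniform, hence $\ln X^0(i) = -\ln n$, gives
\begin{align*}
\ln X^{K+1}(i) + \ln n + \sum_{k=0}^K Z^k = \alpha(K+1)(C\bar{X}^K)_i, \qquad i = 1,\ldots,n.
\end{align*}
The quantity $\ln n + \sum_{k=0}^K Z^k$ is independent of $i$, so it disappears in any difference of the $(C\bar{X}^K)_i$; in particular $\max_i (C\bar{X}^K)_i$ is attained at the index where $X^{K+1}(i)$ is largest, and, since $(C\bar{X}^K)_{\max}$ is attained at a pure strategy,
\begin{align*}
(C\bar{X}^K)_{\max} - \bar{X}^K \cdot C\bar{X}^K = \sum_{i=1}^n \bar{X}^K(i)\bigl[(C\bar{X}^K)_{\max} - (C\bar{X}^K)_i\bigr] = \frac{1}{\alpha(K+1)} \sum_{i=1}^n \bar{X}^K(i)\ln\frac{\max_j X^{K+1}(j)}{X^{K+1}(i)}.
\end{align*}
This identity is the intended content of Lemma~\ref{basis_lemma_3}. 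Since $\max_j X^{K+1}(j) \leq 1$, it yields at once
\begin{align*}
(C\bar{X}^K)_{\max} - \bar{X}^K \cdot C\bar{X}^K \leq \frac{1}{\alpha(K+1)} \sum_{i=1}^n \bar{X}^K(i)\bigl(-\ln X^{K+1}(i)\bigr),
\end{align*}
so everything reduces to controlling the weighted log-deficiency $D_K \equiv \sum_{i} \bar{X}^K(i)(-\ln X^{K+1}(i))$.

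I would prove, as Lemma~\ref{LA}, that $D_K \leq n\alpha(\exp\{\alpha\}-1)(K+1) + O(1)$ with the $O(1)$ uniform in $K$, by induction on $K$. The inductive step would combine the one-step identity $\ln X^{k+1}(i) = \ln X^k(i) + \alpha(CX^k)_i - Z^k$ above; the consequence of the multiplicative weights convexity lemma recorded in Lemma~\ref{very_useful_lemma}, namely $\ln X^{k+1}(i) \geq \ln X^k(i) + \alpha(E_i - X^k)\cdot CX^k - \alpha(\exp\{\alpha\}-1)$; and the average update $\bar{X}^{k} = \tfrac{k}{k+1}\bar{X}^{k-1} + \tfrac{1}{k+1}X^{k}$, which expresses $D_k$ in terms of $D_{k-1}$. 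Expanding $D_k$ through these relations and estimating the resulting terms with the convexity-lemma inequality should show that the per-step cost of $D_k$ is at most (essentially) $n\alpha(\exp\{\alpha\}-1)$, closing the induction.

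Granting Lemma~\ref{LA}, the reduction above gives $(C\bar{X}^K)_{\max} - \bar{X}^K \cdot C\bar{X}^K \leq n(\exp\{\alpha\}-1) + O(1/K)$. To finish, observe that $\{\bar{X}^K\}$ lies in the compact simplex $\mathbb{X}(C)$, so limit points exist; fix one, $X^*$, along a subsequence $\bar{X}^{K_m} \to X^*$. Both $X \mapsto (CX)_{\max}$ (a maximum of finitely many linear functions) and $X \mapsto X\cdot CX$ (a polynomial) are continuous, so passing to the limit along $K_m$ yields $(CX^*)_{\max} - X^*\cdot CX^* \leq n(\exp\{\alpha\}-1)$, which by the approximate-equilibrium reformulation of Section~\ref{preliminaries} is exactly the assertion that $X^*$ is an $n(\exp\{\alpha\}-1)$-approximate symmetric equilibrium strategy of $C$.

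The main obstacle is Lemma~\ref{LA}. The delicate point is that $-\ln X^{K+1}(i)$ can grow linearly in $K$ for a pure strategy $i$ that Hedge drives toward extinction, so the crude Lipschitz estimate $|\ln X^{k+1}(i) - \ln X^{k}(i)| \leq \alpha$ only bounds $D_K$ by a term of order $\alpha(K+1)$ carrying an $\alpha$-independent constant, which would make the right-hand side of the theorem a constant rather than a quantity one can drive to zero by tuning $\alpha$. Extracting from Lemma~\ref{convexity_lemma} enough additional cancellation to replace that constant by the factor $n(\exp\{\alpha\}-1)$ is where essentially all of the effort lies, and it is the step I would scrutinize most carefully.
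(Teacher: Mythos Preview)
Your overall architecture matches the paper's: the telescoped identity you derive is exactly Lemma~\ref{basis_lemma} (not Lemma~\ref{basis_lemma_3}; the paper passes from $\bar{X}^K$ weights to $\bar{X}^{K+1}$ weights in Lemmas~\ref{basis_lemma_2}--\ref{basis_lemma_3} precisely so that the indices line up with the form needed in Lemma~\ref{LA}), and your limit argument is the same as the paper's.

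The genuine gap is in your induction sketch for Lemma~\ref{LA}. Carrying out the recursion you describe---using $\bar{X}^K=\tfrac{K}{K+1}\bar{X}^{K-1}+\tfrac{1}{K+1}X^K$, the exact log-identity, and the convexity-lemma bound $Z^k\le \alpha X^k\!\cdot\! CX^k+\alpha(\exp\{\alpha\}-1)$---one obtains
\[
D_K \;\le\; \tfrac{K}{K+1}\,D_{K-1}\;+\;\tfrac{K}{K+1}\,\alpha\,(X^K-\bar{X}^{K-1})\cdot CX^K\;+\;\alpha(\exp\{\alpha\}-1)\;+\;\tfrac{\ln n}{K+1},
\]
and the cross term $(X^K-\bar{X}^{K-1})\cdot CX^K$ can be as large as~$1$ and does not telescope. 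Iterating yields $D_K/(K+1)$ bounded only by something tending to $\exp\{\alpha\}$, i.e.\ an approximation error that does \emph{not} vanish as $\alpha\to 0$---exactly the failure mode you flag, but your proposed remedy does not avert it.

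The extra cancellation does not come from a sharper use of Lemma~\ref{convexity_lemma}, as you conjecture. In the paper's proof of Lemma~\ref{LA}, Lemma~\ref{very_useful_lemma} is applied only in the crude form $\ln X^{k+1}(j)\ge \ln X^k(j)-\alpha\exp\{\alpha\}$. The decisive idea is a strengthened induction hypothesis: instead of bounding $-\bar{X}^K(j)\ln X^K(j)$ outright, the paper compares it to the \emph{matched-index} average $-\tfrac{1}{K+1}\sum_{k=0}^{K}X^k(j)\ln X^k(j)$, which is uniformly at most $e^{-1}$ because $-x\ln x\le e^{-1}$ on $[0,1]$, and carries along a geometric slack term $\rho^K$ with $\rho=\tfrac12$. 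The missing $-1$ in $\exp\{\alpha\}-1$ then appears from the factor $(K+1)^2/(K+2)^2$ that multiplies the induction hypothesis when it is fed back into the step, combining with that $\rho^K$ slack. It is this index-matching device and the built-in geometric error, not additional information from the convexity lemma, that your sketch is lacking.
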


\noindent
The assumption $X^0$ is the uniform distribution is to simplify notation. It is a simple exercise to lift that assumption. Note that the definition of the empirical average implies the recursive relationship
\begin{align*}
\bar{X}^{K+1} = \frac{1}{K+2} X^{K+1} + \frac{K+1}{K+2} \bar{X}^K,
\end{align*}
which we make use of in the sequel.

\begin{lemma}
\label{basis_lemma}
Suppose $X^0$ is the uniform distribution. Then
\begin{align*}
(C\bar{X}^K)_{\max}  - \bar{X}^K \cdot C\bar{X}^K \leq - \frac{1}{\alpha (K+1)} \sum_{j=1}^n \bar{X}^K(j) \ln \left( X^{K+1}(j) \right)
\end{align*}
\end{lemma}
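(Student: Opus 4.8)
The plan is to obtain the claimed formula directly from the defining equation \eqref{main_exp} of Hedge by taking logarithms and telescoping over the iterates, exactly as advertised in the introduction. Writing $X^{k+1}=T(X^k)$ and taking logarithms of \eqref{main_exp} gives, for every pure strategy $i$ and every $k$,
\[
\ln X^{k+1}(i) = \ln X^k(i) + \alpha\,(CX^k)_i - Z^k ,
\]
where $Z^k \equiv \ln\big(\sum_{j=1}^n X^k(j)\exp\{\alpha (CX^k)_j\}\big)$ is a normalization term not depending on $i$ (and each $X^k$ is interior, so all these logarithms are finite). Summing this identity over $k=0,1,\dots,K$, the first-term differences telescope to $\ln X^{K+1}(i)-\ln X^0(i)$; using that $X^0$ is uniform, so $\ln X^0(i)=-\ln n$ for every $i$, and that $\sum_{k=0}^K (CX^k)_i = (K+1)\,(C\bar{X}^K)_i$, and dividing by $\alpha(K+1)>0$, we arrive at
\[
(C\bar{X}^K)_i = \frac{1}{\alpha(K+1)}\Big( \ln X^{K+1}(i) + \ln n + \sum_{k=0}^K Z^k \Big).
\]
The key structural point is that the only $i$-dependent quantity on the right-hand side is $\ln X^{K+1}(i)$.

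Next I would evaluate both sides of the target inequality using this identity. Since $Y\mapsto Y\cdot C\bar{X}^K$ is linear on the simplex $\mathbb{X}(C)$, the value $(C\bar{X}^K)_{\max}$ is attained at a vertex $E_i$, so $(C\bar{X}^K)_{\max}=\max_i (C\bar{X}^K)_i = \frac{1}{\alpha(K+1)}\big(\max_i \ln X^{K+1}(i) + \ln n + \sum_{k=0}^K Z^k\big)$. Averaging the same identity against the probabilities $\bar{X}^K$ and using $\sum_i \bar{X}^K(i)=1$ gives $\bar{X}^K\cdot C\bar{X}^K = \frac{1}{\alpha(K+1)}\big(\sum_{j=1}^n \bar{X}^K(j)\ln X^{K+1}(j) + \ln n + \sum_{k=0}^K Z^k\big)$. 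Subtracting, the constants $\ln n$ and $\sum_{k=0}^K Z^k$ cancel, and we obtain
\[
(C\bar{X}^K)_{\max} - \bar{X}^K\cdot C\bar{X}^K = \frac{1}{\alpha(K+1)}\Big( \max_i \ln X^{K+1}(i) - \sum_{j=1}^n \bar{X}^K(j)\ln X^{K+1}(j) \Big).
\]
Finally, $X^{K+1}$ is a probability vector, so $X^{K+1}(i)\le 1$ and hence $\ln X^{K+1}(i)\le 0$ for every $i$; dropping the nonpositive term $\max_i \ln X^{K+1}(i)$ yields the stated bound.

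I do not expect a genuine obstacle here — the argument is a direct computation — and the only points requiring care are: (i) recognizing that both the Hedge normalization terms $Z^k$ and the $\ln n$ produced by the uniform start are independent of $i$ and therefore cancel when the two sides are subtracted; (ii) that $(C\bar{X}^K)_{\max}$ is realized at a pure strategy, so it may be written as $\max_i (C\bar{X}^K)_i$; and (iii) the sign bookkeeping ($\alpha>0$ for the division, and $\ln X^{K+1}(i)\le 0$ at the end). The substantive work of the convergence theorem is deferred to the subsequent step, namely bounding $-\sum_{j=1}^n \bar{X}^K(j)\ln X^{K+1}(j)$ from above, which is where the multiplicative weights convexity lemma and an inductive argument come in.
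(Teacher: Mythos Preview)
Your proof is correct and follows essentially the same approach as the paper: take logarithms of the Hedge update, telescope over $k=0,\dots,K$, use the uniform start to kill the initial term, and then combine the max and the $\bar{X}^K$-average. The only cosmetic difference is that the paper eliminates the normalizer at the outset by working with the ratio $\ln\big(X^{k+1}(i)/X^{k+1}(j)\big)$, whereas you carry the $Z^k$ terms and cancel them upon subtraction; both routes yield the same inequality after dropping the nonpositive term $\max_i \ln X^{K+1}(i)$.
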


\begin{proof}
Let $T(X) \equiv \hat{X}$. Then straight algebra gives
\begin{align*}
\frac{\hat{X}(i)}{\hat{X}(j)} = \frac{X(i)}{X(j)} \exp\{\alpha ((CX)_i - (CX)_j)\}
\end{align*}
and taking logarithms on both sides we obtain
\begin{align*}
\ln \left( \frac{\hat{X}(i)}{\hat{X}(j)} \right) = \ln \left( \frac{X(i)}{X(j)} \right) + \alpha ((CX)_i - (CX)_j).
\end{align*}
We may write the previous equation as
\begin{align*}
\ln \left( \frac{X^{k+1}(i)}{X^{k+1}(j)} \right) = \ln \left( \frac{X^k(i)}{X^k(j)} \right) + \alpha ((CX^k)_i - (CX^k)_j)
\end{align*}
Summing over $k = 0, \ldots K$, we obtain
\begin{align*}
\ln \left( \frac{X^{K+1}(i)}{X^{K+1}(j)} \right) = \ln \left( \frac{X^0(i)}{X^0(j)} \right) + \alpha \sum_{k=0}^K ((CX^k)_i - (CX^k)_j)
\end{align*}
and dividing by $K+1$ and rearranging, we further obtain
\begin{align*}
\frac{1}{\alpha (K+1)} \ln \left( \frac{X^{K+1}(i)}{X^{K+1}(j)} \right) = \frac{1}{\alpha (K+1)} \ln \left( \frac{X^0(i)}{X^0(j)} \right) + (E_i - E_j) \cdot C\bar{X}^K.
\end{align*}
Under the assumption $X^0$ is the uniform distribution,
\begin{align*}
\frac{1}{\alpha (K+1)} \ln \left( \frac{X^{K+1}(i)}{X^{K+1}(j)} \right) = (E_i - E_j) \cdot C\bar{X}^K,
\end{align*}
which implies
\begin{align*}
\frac{1}{\alpha (K+1)} \ln \left( \frac{X^{K+1}(i_{\max})}{X^{K+1}(j)} \right) = (C\bar{X}^K)_{\max} - (C\bar{X}^K)_j
\end{align*}
and further implies
\begin{align*}
(C\bar{X}^K)_{\max} - (C\bar{X}^K)_j \leq - \frac{1}{\alpha (K+1)} \ln \left( X^{K+1}(j) \right).
\end{align*}
Multiplying both sides with $\bar{X}^K(j)$ and summing over $j=1, \ldots, n$, we finally obtain
\begin{align*}
(C\bar{X}^K)_{\max} - \bar{X}^K \cdot C\bar{X}^K \leq - \frac{1}{\alpha (K+1)} \sum_{j=1}^n \bar{X}^K(j) \ln \left( X^{K+1}(j) \right)
\end{align*}
as claimed.
\end{proof}

\begin{lemma}
\label{basis_lemma_2}
Suppose $X^0$ is the uniform distribution. Then
\begin{align*}
(C\bar{X}^K)_{\max}  - \bar{X}^{K+1} \cdot C\bar{X}^K \leq - \frac{1}{\alpha (K+1)} \sum_{j=1}^n \bar{X}^{K+1}(j) \ln \left( X^{K+1}(j) \right)
\end{align*}
\end{lemma}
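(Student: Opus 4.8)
\noindent\textit{Proof proposal.} The plan is to reuse the per-coordinate estimate already established in the proof of Lemma~\ref{basis_lemma} and simply change the probability vector against which we average. Concretely, the computation in that proof shows, under the hypothesis that $X^0$ is uniform, that for every index $j \in \{1,\ldots,n\}$ we have
\begin{align*}
(C\bar{X}^K)_{\max} - (C\bar{X}^K)_j \leq - \frac{1}{\alpha (K+1)} \ln \left( X^{K+1}(j) \right),
\end{align*}
and this inequality holds coordinatewise regardless of which distribution we later use as weights. (Recall also that $X^{K+1}(j) > 0$ for all $j$, since $X^0 \in \mathbb{\mathring{X}}(C)$ and the map $T$ in \eqref{main_exp} preserves the relative interior of the simplex, so every logarithm above is well defined, and likewise every entry of $\bar{X}^{K+1}$ is strictly positive.)

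The only new step is to take the $\bar{X}^{K+1}$-weighted average of this family of inequalities rather than the $\bar{X}^K$-weighted one used in Lemma~\ref{basis_lemma}. Since $\bar{X}^{K+1}(j) \geq 0$ for every $j$, multiplying each inequality by $\bar{X}^{K+1}(j)$ preserves its direction, and summing over $j = 1, \ldots, n$ together with $\sum_{j=1}^n \bar{X}^{K+1}(j) = 1$ yields
\begin{align*}
(C\bar{X}^K)_{\max} - \bar{X}^{K+1} \cdot C\bar{X}^K \leq - \frac{1}{\alpha (K+1)} \sum_{j=1}^n \bar{X}^{K+1}(j) \ln \left( X^{K+1}(j) \right),
\end{align*}
which is exactly the claimed bound. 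There is essentially no obstacle here: the content of the lemma is entirely carried by the identity $(E_i - E_j)\cdot C\bar{X}^K = \frac{1}{\alpha(K+1)}\ln(X^{K+1}(i)/X^{K+1}(j))$ derived earlier, and the present statement differs from Lemma~\ref{basis_lemma} only in the bookkeeping choice of averaging weights. The reason for recording this variant separately is presumably that the combination $\bar{X}^{K+1}\cdot C\bar{X}^K$ on the left-hand side is the quantity that meshes with the recursion $\bar{X}^{K+1} = \frac{1}{K+2}X^{K+1} + \frac{K+1}{K+2}\bar{X}^K$ in the subsequent argument, so if anything deserves care it is making sure the inequality is stated in the form that will later telescope, not its (routine) derivation.
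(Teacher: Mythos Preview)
Your proposal is correct and matches the paper's own approach: the paper simply states that the proof is ``directly analogous to the proof of Lemma~\ref{basis_lemma},'' and you have spelled out precisely what that means, namely reusing the coordinatewise inequality $(C\bar{X}^K)_{\max} - (C\bar{X}^K)_j \leq -\frac{1}{\alpha(K+1)}\ln(X^{K+1}(j))$ and averaging against $\bar{X}^{K+1}$ instead of $\bar{X}^K$.
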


\begin{proof}
The proof is directly analogous to the proof of Lemma \ref{basis_lemma}.
\end{proof}

\begin{lemma}
\label{basis_lemma_3}
Under the assumptions of Theorem \ref{asymptotic_convergence_theorem_3},
\begin{align*}
(C\bar{X}^K)_{\max}  - \bar{X}^{K} \cdot C\bar{X}^K \leq - \frac{1}{\alpha (K+1)} \sum_{j=1}^n \bar{X}^{K+1}(j) \ln \left( X^{K+1}(j) \right) + \frac{1}{K+1}.
\end{align*}
\end{lemma}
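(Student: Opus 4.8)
The plan is to bootstrap from Lemma \ref{basis_lemma_2} rather than from Lemma \ref{basis_lemma}: the right-hand side of Lemma \ref{basis_lemma_2} already coincides with the right-hand side we are aiming for, except for the additive $\frac{1}{K+1}$, while its left-hand side carries the ``wrong'' bilinear term $\bar X^{K+1}\cdot C\bar X^K$ in place of $\bar X^K\cdot C\bar X^K$. So, after adding and subtracting, it suffices to prove
\[
(\bar X^{K+1}-\bar X^K)\cdot C\bar X^K \le \frac{1}{K+1}.
\]

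First I would use the recursion $\bar X^{K+1}=\frac{1}{K+2}X^{K+1}+\frac{K+1}{K+2}\bar X^K$ noted just after the statement of Theorem \ref{asymptotic_convergence_theorem_3}, which gives $\bar X^{K+1}-\bar X^K=\frac{1}{K+2}(X^{K+1}-\bar X^K)$, hence
\[
(\bar X^{K+1}-\bar X^K)\cdot C\bar X^K=\frac{1}{K+2}\bigl(X^{K+1}\cdot C\bar X^K-\bar X^K\cdot C\bar X^K\bigr).
\]
Because $C\in\mathbb{\hat{C}}$ has all entries in $[0,1]$ and $X^{K+1},\bar X^K\in\mathbb{X}(C)$ are probability vectors, both bilinear forms $X^{K+1}\cdot C\bar X^K$ and $\bar X^K\cdot C\bar X^K$ lie in $[0,1]$; therefore their difference is at most $1$, and the displayed quantity is at most $\frac{1}{K+2}\le\frac{1}{K+1}$, as needed.

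Finally I would assemble the pieces: writing $(C\bar X^K)_{\max}-\bar X^K\cdot C\bar X^K$ as $\bigl[(C\bar X^K)_{\max}-\bar X^{K+1}\cdot C\bar X^K\bigr]+(\bar X^{K+1}-\bar X^K)\cdot C\bar X^K$, bounding the first bracket by Lemma \ref{basis_lemma_2} and the second summand by the inequality just established, yields the claim. I do not anticipate a real obstacle; the only points that require care are the decision to start from Lemma \ref{basis_lemma_2} instead of Lemma \ref{basis_lemma} and the invocation of the normalization $C\in\mathbb{\hat{C}}$ to absorb the $O(1/K)$ slack created by swapping $\bar X^{K+1}$ for $\bar X^K$ in the bilinear term.
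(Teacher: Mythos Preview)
Your proposal is correct and follows essentially the same approach as the paper: both start from Lemma~\ref{basis_lemma_2}, use the recursion $\bar X^{K+1}=\frac{1}{K+2}X^{K+1}+\frac{K+1}{K+2}\bar X^K$ together with the normalization $C\in\mathbb{\hat C}$ to show $\bar X^{K+1}\cdot C\bar X^K-\bar X^K\cdot C\bar X^K\le \frac{1}{K+1}$, and then combine. The only cosmetic difference is that you compute $\bar X^{K+1}-\bar X^K$ directly (obtaining the slightly sharper $\frac{1}{K+2}$ before relaxing), whereas the paper solves the recursion for $\bar X^K\cdot C\bar X^K$ and applies the bounds $X^{K+1}\cdot C\bar X^K\le 1$ and $\bar X^{K+1}\cdot C\bar X^K\ge 0$ in two steps.
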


\begin{proof}
Since
\begin{align*}
\bar{X}^{K+1} = \frac{1}{K+2} X^{K+1} + \frac{K+1}{K+2} \bar{X}^K
\end{align*}
we have
\begin{align*}
\bar{X}^{K+1} \cdot C \bar{X}^{K} = \frac{1}{K+2} X^{K+1} \cdot C\bar{X}^K + \frac{K+1}{K+2} \bar{X}^K \cdot C\bar{X}^K
\end{align*}
and, therefore,
\begin{align*}
\bar{X}^K \cdot C\bar{X}^K = \frac{K+2}{K+1}  \bar{X}^{K+1} \cdot C \bar{X}^{K} - \frac{1}{K+1} X^{K+1} \cdot C\bar{X}^K,
\end{align*}
which, using the assumption $C \in \mathbb{\hat{C}}$ so that $X^{K+1} \cdot C\bar{X}^K \leq 1$, implies
\begin{align*}
\bar{X}^K \cdot C\bar{X}^K \geq \frac{K+2}{K+1}  \bar{X}^{K+1} \cdot C \bar{X}^{K} - \frac{1}{K+1} \geq \bar{X}^{K+1} \cdot C \bar{X}^{K} - \frac{1}{K+1},
\end{align*}
which, combined with Lemma \ref{basis_lemma_2}, implies the lemma.
\end{proof}

\newpage

\begin{lemma}
\label{LA}
Under the assumptions of Theorem \ref{asymptotic_convergence_theorem_3}, the sequence $\{ X^k(j) \}$ of probability masses corresponding to pure strategy $j \in \{1, \ldots, n\}$ satisfies the following relation
\begin{align*}
- \frac{1}{\alpha (K+1)} \bar{X}^K(j) \ln \left( X^{K}(j) \right) \leq &- \frac{1}{\alpha (K+1)} \left( \frac{1}{K+1} \sum_{k=0}^K X^k(j) \ln \left( X^k(j) \right) \right) + (\exp\{\alpha\} - 1) + \rho^K,
\end{align*}
where $\rho = 1/2$.
\end{lemma}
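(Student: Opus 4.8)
The plan is to prove Lemma \ref{LA} by induction on $K$, using Lemma \ref{very_useful_lemma} as the engine that drives the induction step, since that lemma is precisely the tool that relates $\ln(X^{k+1}(j))$ to $\ln(X^k(j))$ with a controlled error term $\alpha(\exp\{\alpha\}-1)$. The shape of the claimed inequality --- comparing $\bar{X}^K(j)\ln(X^K(j))$ against the empirical average $\frac{1}{K+1}\sum_{k=0}^K X^k(j)\ln(X^k(j))$ --- strongly suggests that we should first establish a \emph{one-step} recursion that upgrades the bound at stage $K$ to the bound at stage $K+1$, absorbing the discrepancy into the $\rho^K$ term with $\rho = 1/2$ (so that $\sum_K \rho^K$ stays bounded, which is what ultimately makes the approximation error in Theorem \ref{asymptotic_convergence_theorem_3} controllable).

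First I would rewrite, using Lemma \ref{very_useful_lemma}, the inequality
\[
-\ln(X^{k+1}(j)) \leq -\ln(X^k(j)) - \alpha (E_j - X^k)\cdot CX^k + \alpha(\exp\{\alpha\}-1),
\]
valid for every $k$ and $j$ since $X^k \in \mathbb{\mathring{X}}(C)$ (the iterates stay interior because $T$ preserves full support). Summing a suitably weighted version of these inequalities over $k = 0,\dots,K$ --- with weights chosen to match the coefficients $\bar{X}^K(j) = \frac{1}{K+1}\sum_{k=0}^K X^k(j)$ appearing on the left --- should produce the left-hand side of the target after dividing by $\alpha(K+1)$, while the telescoping-like structure of the $\ln$ terms produces the empirical-average term on the right. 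The cross terms $(E_j - X^k)\cdot CX^k$ and the mismatch between the weight $X^k(j)$ at time $k$ and the ``current'' mass need to be bounded: here I would use $C \in \mathbb{\hat{C}}$ so that all payoff quantities lie in $[0,1]$, and I would track the difference between $X^K(j)$ and $X^k(j)$ (or between $\bar{X}^K(j)$ and the running averages) as the source of the $\rho^K = (1/2)^K$ slack. The factor $1/2$ presumably arises from a geometric bound such as $X^{k}(j)/X^{k+1}(j)$ or a ratio of partial sums; I would look for the cleanest place where a term of the form $\frac{1}{K+1}\cdot(\text{something bounded})$ can be re-summed into $(1/2)^K$, possibly by a separate sub-induction showing $|\bar{X}^K(j)\ln X^K(j) - \frac{1}{K+1}\sum_k X^k(j)\ln X^k(j)|$ decays geometrically after accounting for the $(\exp\{\alpha\}-1)$ drift.

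The main obstacle, as the author flags parenthetically (``Lemma \ref{LA} was the hardest part''), will be bookkeeping the error terms so that exactly $(\exp\{\alpha\}-1)$ appears as the constant (not a multiple of it) and the remaining discrepancy is genuinely of order $\rho^K$ rather than $O(1/K)$ or worse. Concretely, the delicate point is handling the weights: the left side uses $\bar{X}^K(j)$, a time-$K$ average, multiplying a time-$K$ log, whereas the natural output of summing Lemma \ref{very_useful_lemma} gives cross terms like $\sum_k X^k(j)\bigl((E_j - X^k)\cdot CX^k\bigr)$ that do not obviously cancel. I expect one must exploit the identity from Lemma \ref{basis_lemma}'s proof --- that $\frac{1}{\alpha(K+1)}\ln(X^{K+1}(i)/X^{K+1}(j)) = (E_i - E_j)\cdot C\bar{X}^K$ under the uniform-start assumption --- to convert those cross terms back into something expressible via $\bar{X}^K$, and then control the residual by the boundedness of $CX$ together with the geometric weighting. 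The inductive hypothesis at stage $K$ should be fed into the stage-$K+1$ estimate precisely at the step where the running empirical average $\frac{1}{K+1}\sum_{k=0}^K X^k(j)\ln X^k(j)$ is re-expressed in terms of its stage-$(K+1)$ counterpart, and the algebra of $\frac{K+1}{K+2}$ versus $\frac{1}{K+2}$ (the same convex-combination identity used in Lemma \ref{basis_lemma_3}) is what I would expect to generate the factor $1/2$ in $\rho$.
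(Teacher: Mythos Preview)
Your overall skeleton --- induction on $K$, driven by Lemma \ref{very_useful_lemma}, with the $\rho=1/2$ emerging from the convex-combination coefficients $\tfrac{K+1}{K+2}$ versus $\tfrac{1}{K+2}$ --- is exactly the architecture the paper uses. But several of the mechanisms you propose are red herrings, and following them would make the argument harder, not easier.

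First, the paper does \emph{not} sum the Lemma \ref{very_useful_lemma} inequality over $k=0,\dots,K$. It applies that lemma exactly once per induction step, to pass from $\ln(X^{K+1}(j))$ back to $\ln(X^{K}(j))$; the summation over $k$ on the right-hand side of the target inequality is produced entirely by the induction hypothesis, not by telescoping. Second, the cross term $(E_j-X^k)\cdot CX^k$ is not tracked at all: the paper immediately discards it via the crude bound $(CX^k)_j - X^k\cdot CX^k \geq -1$ (valid since $C\in\mathbb{\hat{C}}$), so Lemma \ref{very_useful_lemma} is used only in the weakened form $\ln(X^{k+1}(j))\geq \ln(X^k(j)) - \alpha\exp\{\alpha\}$. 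There is no need for the identity from the proof of Lemma \ref{basis_lemma}, and no careful cancellation of payoff terms. Third, the factor $1/2$ does not arise from any geometric decay of iterate ratios or of $|\bar{X}^K(j)\ln X^K(j)-\tfrac{1}{K+1}\sum_k X^k(j)\ln X^k(j)|$; it comes solely from the elementary inequality $\tfrac{K+1}{K+2}\geq 1-\rho$ with $\rho=1/2$, applied after the induction hypothesis has been invoked.

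Concretely, the paper's induction step is: expand $\bar{X}^{K+1}(j)=\tfrac{1}{K+2}X^{K+1}(j)+\tfrac{K+1}{K+2}\bar{X}^K(j)$ inside $-\tfrac{1}{\alpha(K+2)}\bar{X}^{K+1}(j)\ln(X^{K+1}(j))$; in the second summand replace $-\ln(X^{K+1}(j))$ by $-\ln(X^K(j))+\alpha\exp\{\alpha\}$ using the crude form of Lemma \ref{very_useful_lemma}; apply the induction hypothesis to the resulting $-\tfrac{1}{\alpha(K+1)}\bar{X}^K(j)\ln(X^K(j))$; then bound $\bar{X}^K(j)<1$ and use $\tfrac{K+1}{K+2}\geq 1-\rho$ to close the recursion. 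Your proposal would benefit from dropping the summation-and-cross-term plan and following this direct route instead.
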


\begin{proof}
Our proof is by induction. The basis of the induction ($K = 0$) is straightforward. Suppose now the relation is true for iteration $K > 0$. We then have for the next iteration
\begin{align*}
- \frac{1}{\alpha (K+2)} \bar{X}^{K+1}(j) \ln \left( X^{K+1}(j) \right) =
\end{align*}
\begin{align*}
= - \frac{1}{\alpha (K+2)} \left( \frac{1}{K+2} X^{K+1}(j) + \frac{K+1}{K+2} \bar{X}^K(j) \right) \ln \left( X^{K+1}(j) \right) =
\end{align*}
Using straight algebra we obtain
\begin{align*}
= - \frac{1}{\alpha (K+2)}  \frac{1}{K+2} X^{K+1}(j) \ln \left( X^{K+1}(j) \right) - \frac{1}{\alpha (K+2)} \frac{K+1}{K+2} \bar{X}^K(j) \ln \left( X^{K+1}(j) \right) =
\end{align*}
\begin{align}
= - \frac{1}{\alpha (K+2)}  \frac{1}{K+2} X^{K+1}(j) \ln \left( X^{K+1}(j) \right) - \frac{(K+1)^2}{(K+2)^2} \cdot \frac{1}{\alpha (K+1)}  \bar{X}^K(j) \ln \left( X^{K+1}(j) \right)\label{abcde}
\end{align}
We may now make use of the relation
\begin{align*}
\ln(X^{k+1}(j)) \geq \ln(X^k(j)) + \alpha ((CX^k)_{j} - X^k \cdot CX^k) - \alpha (\exp\{\alpha\}-1)
\end{align*}
which is obtained from Lemma \ref{very_useful_lemma}. Since $(CX^k)_{j} - X^k \cdot CX^k \geq -1$ by the assumption $C \in \mathbb{\hat{C}}$, using straight algebra we obtain
\begin{align*}
\ln(X^{k+1}(j)) \geq \ln(X^k(j)) - \alpha \exp\{\alpha\}
\end{align*}
Combining the previous inequality with \eqref{abcde}, we obtain
\begin{align*}
\leq &- \frac{1}{\alpha (K+2)}  \frac{1}{K+2} X^{K+1}(j) \ln \left( X^{K+1}(j) \right) - \frac{(K+1)^2}{(K+2)^2} \cdot \frac{1}{\alpha (K+1)}  \bar{X}^K(j) \ln \left( X^{K}(j) \right) +\\
&+ \frac{(K+1)^2}{(K+2)^2} \frac{1}{\alpha(K+1)} \alpha \exp\{\alpha\} \bar{X}^K(j)
\end{align*}
Using the induction hypothesis we further obtain
\begin{align*}
\leq &- \frac{1}{\alpha (K+2)}  \frac{1}{K+2} X^{K+1}(j) \ln \left( X^{K+1}(j) \right) - \frac{(K+1)^2}{(K+2)^2}\frac{1}{\alpha (K+1)} \left( \frac{1}{K+1} \sum_{k=0}^K X^k(j) \ln \left( X^k(j) \right) \right) +\\
  &+ \frac{(K+1)^2}{(K+2)^2} \frac{1}{\alpha(K+1)} \alpha \exp\{\alpha\} \bar{X}^K(j) + \frac{(K+1)^2}{(K+2)^2} (\exp\{\alpha\} - 1 + \rho^{K})
\end{align*}
which implies (since $0 < \bar{X}^K(j) < 1$)
\begin{align*}
\leq &- \frac{1}{\alpha (K+2)}  \frac{1}{K+2} X^{K+1}(j) \ln \left( X^{K+1}(j) \right) - \frac{(K+1)^2}{(K+2)^2}\frac{1}{\alpha (K+1)} \left( \frac{1}{K+1} \sum_{k=0}^K X^k(j) \ln \left( X^k(j) \right) \right) +\\
  &+ \frac{(K+1)^2}{(K+2)^2} \frac{1}{K+1} \exp\{\alpha\} + \frac{(K+1)^2}{(K+2)^2} (\exp\{\alpha\} -1 + \rho^{K})
\end{align*}
which further implies
\begin{align*}
\leq &- \frac{1}{\alpha (K+2)}  \frac{1}{K+2} X^{K+1}(j) \ln \left( X^{K+1}(j) \right) - \frac{1}{\alpha (K+2)} \left( \frac{1}{K+2} \sum_{k=0}^K X^k(j) \ln \left( X^k(j) \right) \right) +\\
  &+ \frac{K+1}{K+2} \exp\{\alpha\} - \left( \frac{K+1}{K+2} \right)^2 (1- \rho^K)
\end{align*}
and since
\begin{align*}
\frac{K+1}{K+2} \geq 1- \rho, K = 0,1, \ldots
\end{align*}
we obtain
\begin{align*}
\leq &- \frac{1}{\alpha (K+2)}  \frac{1}{K+2} X^{K+1}(j) \ln \left( X^{K+1}(j) \right) - \frac{1}{\alpha (K+2)} \left( \frac{1}{K+2} \sum_{k=0}^K X^k(j) \ln \left( X^k(j) \right) \right) +\\
  &+ \frac{K+1}{K+2} (\exp\{\alpha\}- 1 - \rho - \rho^K + \rho^{K+1})
\end{align*}
which finally implies
\begin{align*}
\leq -\frac{1}{\alpha (K+2)} \left( \frac{1}{K+2} \sum_{k=0}^{K+1} X^k(j) \ln \left( X^k(j) \right) \right) + ( \exp\{\alpha\}- 1) + \rho^{K+1}
\end{align*}
completing the proof of the lemma.
\end{proof}

\begin{lemma}
\label{cat}
Under the assumptions of Theorem \ref{asymptotic_convergence_theorem_3}, the sequence of empirical averages
\begin{align*}
\left\{ \frac{1}{K+1} \sum_{k=0}^K X^k \equiv \bar{X}^K \right\}_{K=0}^{\infty},
\end{align*}
satisfies the following relation
\begin{align}
(C\bar{X}^K)_{\max}  - \bar{X}^K \cdot C\bar{X}^K \leq \frac{n \exp\{-1\}}{\alpha (K+1)} + \frac{1}{K+1} + \left( \frac{K+2}{K+1} \right) n (\exp\{\alpha\}-1) + \left( \frac{K+2}{K+1} \right) n \rho^K.\label{mylove}
\end{align}
\end{lemma}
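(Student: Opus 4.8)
The plan is to chain Lemma~\ref{basis_lemma_3} with Lemma~\ref{LA} (the latter applied after an index shift) and then to close with the elementary fact that $t\ln t \geq -\exp\{-1\}$ for $t \in [0,1]$. Lemma~\ref{basis_lemma_3} already gives
\begin{align*}
(C\bar{X}^K)_{\max} - \bar{X}^K\cdot C\bar{X}^K \leq -\frac{1}{\alpha(K+1)}\sum_{j=1}^n \bar{X}^{K+1}(j)\ln\left(X^{K+1}(j)\right) + \frac{1}{K+1},
\end{align*}
so the whole task reduces to upper bounding $-\frac{1}{\alpha(K+1)}\sum_{j=1}^n \bar{X}^{K+1}(j)\ln(X^{K+1}(j))$ by the remaining three terms of \eqref{mylove}.

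For that, I would invoke Lemma~\ref{LA} with $K$ replaced by $K+1$: for each pure strategy $j$ this bounds $-\frac{1}{\alpha(K+2)}\bar{X}^{K+1}(j)\ln(X^{K+1}(j))$ by $-\frac{1}{\alpha(K+2)}\bigl(\frac{1}{K+2}\sum_{k=0}^{K+1}X^k(j)\ln(X^k(j))\bigr) + (\exp\{\alpha\}-1) + \rho^{K+1}$. Multiplying this inequality through by $(K+2)/(K+1) > 0$ turns the prefactor $1/(\alpha(K+2))$ into $1/(\alpha(K+1))$, which is exactly what is needed to match the term produced by Lemma~\ref{basis_lemma_3}, at the cost of an innocuous factor $(K+2)/(K+1)$ on the two error terms. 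Summing over $j = 1, \ldots, n$ then gives
\begin{align*}
-\frac{1}{\alpha(K+1)}\sum_{j=1}^n\bar{X}^{K+1}(j)\ln\left(X^{K+1}(j)\right) \leq -\frac{1}{\alpha(K+1)(K+2)}\sum_{k=0}^{K+1}\sum_{j=1}^n X^k(j)\ln\left(X^k(j)\right) + \frac{K+2}{K+1}n(\exp\{\alpha\}-1) + \frac{K+2}{K+1}n\rho^{K+1}.
\end{align*}

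To finish, I would control the double sum: since $t\ln t$ attains its minimum $-\exp\{-1\}$ on $[0,1]$, we have $\sum_{j=1}^n X^k(j)\ln(X^k(j)) \geq -n\exp\{-1\}$ for each $k$, hence $-\sum_{k=0}^{K+1}\sum_{j=1}^n X^k(j)\ln(X^k(j)) \leq (K+2)\,n\exp\{-1\}$ and the double-sum term is at most $n\exp\{-1\}/(\alpha(K+1))$. Plugging this together with the previous display back into the inequality from Lemma~\ref{basis_lemma_3}, and using $\rho^{K+1} \leq \rho^K$ (as $\rho = 1/2$), produces precisely \eqref{mylove}. I do not expect a genuine obstacle here; the only delicate point is the bookkeeping around the index shift $K \mapsto K+1$ in Lemma~\ref{LA} and the rescaling by $(K+2)/(K+1)$, which is what lines up the $1/(\alpha(K+1))$ prefactors so that the two lemmas compose, and which accounts for the $(K+2)/(K+1)$ factors appearing in the statement.
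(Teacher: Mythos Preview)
Your proposal is correct and follows essentially the same route as the paper: start from Lemma~\ref{basis_lemma_3}, apply Lemma~\ref{LA} at index $K+1$, rescale by $(K+2)/(K+1)$, sum over $j$, and finish with $-t\ln t \leq \exp\{-1\}$. The only cosmetic difference is that the paper writes $\rho^K$ immediately after invoking Lemma~\ref{LA} (silently absorbing $\rho^{K+1}\leq\rho^K$), whereas you make that step explicit.
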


\begin{proof}
Lemma \ref{basis_lemma_3} implies that
\begin{align}
(C\bar{X}^K)_{\max}  - \bar{X}^K \cdot C\bar{X}^K \leq - \frac{1}{\alpha (K+1)} \sum_{j=1}^n \bar{X}^{K+1}(j) \ln \left( X^{K+1}(j) \right) + \frac{1}{K+1}.\label{babe}
\end{align}
Lemma \ref{LA} further implies that
\begin{align*}
- \frac{1}{\alpha (K+2)} \bar{X}^{K+1}(j) \ln \left( X^{K+1}(j) \right) \leq - \frac{1}{\alpha (K+2)} \left( \frac{1}{K+2} \sum_{k=0}^{K+1} X^k(j) \ln \left( X^k(j) \right) \right) + (\exp\{\alpha\}-1) + \rho^K.
\end{align*}
Using elementary calculus we obtain $- x \ln(x) \leq \exp\{-1\}, x \in [0, 1]$ and, therefore,
\begin{align*}
- \frac{1}{\alpha (K+2)} \bar{X}^{K+1}(j) \ln \left( X^{K+1}(j) \right) \leq \frac{\exp\{-1\}}{\alpha (K+2)} + (\exp\{\alpha\}-1) + \rho^K.
\end{align*}
Thus,
\begin{align*}
- \frac{1}{\alpha (K+1)} \sum_{j=1}^n \bar{X}^{K+1}(j) \ln \left( X^{K+1}(j) \right) \leq \frac{n \exp\{-1\}}{\alpha (K+1)} + \left( \frac{K+2}{K+1} \right) n (\exp\{\alpha\}-1) + n \left( \frac{K+2}{K+1} \right) \rho^K
\end{align*}
and combining with \eqref{babe} we obtain
\begin{align*}
(C\bar{X}^K)_{\max}  - \bar{X}^K \cdot C\bar{X}^K \leq \frac{n \exp\{-1\}}{\alpha (K+1)} + \frac{1}{K+1} + \left( \frac{K+2}{K+1} \right) n (\exp\{\alpha\}-1) + n \left( \frac{K+2}{K+1} \right) \rho^K
\end{align*}
as claimed.
\end{proof}

\begin{proof}[Proof of Theorem \ref{asymptotic_convergence_theorem_3}]
\eqref{mylove} implies that
\begin{align*}
\lim_{K \rightarrow \infty} \left\{ (C\bar{X}^K)_{\max}  - \bar{X}^K \cdot C\bar{X}^K \right\} = n (\exp\{\alpha\}-1).
\end{align*}
Therefore, if $\bar{X}$ is any limit point of $\{\bar{X}^K\}$, then 
\begin{align*}
(C\bar{X})_{\max} - \bar{X} \cdot C\bar{X} \leq n (\exp\{\alpha\}-1),
\end{align*}
implying it is an $n (\exp\{\alpha\}-1)$-approximate equilibrium strategy and the theorem follows.
\end{proof}

\subsection{A fully polynomial-time approximation scheme}

Deriving an equilibrium fully polynomial-time approximation scheme from relation \eqref{mylove} is straightforward as shown next. First we need a lemma:

\begin{lemma}
Under the assumptions of Theorem \ref{asymptotic_convergence_theorem_3}, for all $\theta > 0$, in 
\begin{align}
K = \floor*{\frac{n \exp\{-1\}+1 +\alpha }{\alpha \theta}}\label{K_formula}
\end{align}
iterations, we have that
\begin{align*}
(C\bar{X}^K)_{\max}  - \bar{X}^K \cdot C\bar{X}^K \leq \left( \frac{K+2}{K+1} \right) n(\exp\{\alpha\}-1) + \left( \frac{K+2}{K+1} \right) n \rho^K + \theta.
\end{align*}
\end{lemma}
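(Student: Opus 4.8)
The plan is to derive this lemma directly from inequality \eqref{mylove} of Lemma \ref{cat} with essentially no new work: the last two summands on the right-hand side of \eqref{mylove}, namely $\left(\frac{K+2}{K+1}\right) n(\exp\{\alpha\}-1)$ and $\left(\frac{K+2}{K+1}\right) n\rho^K$, are \emph{identical} to the first two summands of the bound we are asked to prove. Hence the whole task reduces to showing that, for the value of $K$ prescribed in \eqref{K_formula}, the two remaining summands of \eqref{mylove} together do not exceed $\theta$, i.e.
\begin{align*}
\frac{n\exp\{-1\}}{\alpha(K+1)} + \frac{1}{K+1} \le \theta .
\end{align*}

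First I would combine the left-hand side over a common denominator and rewrite it as $\frac{n\exp\{-1\}+\alpha}{\alpha(K+1)}$. Next, applying the elementary floor estimate $\floor*{x} > x-1$ to $x = \frac{n\exp\{-1\}+1+\alpha}{\alpha\theta}$ gives $K+1 > \frac{n\exp\{-1\}+1+\alpha}{\alpha\theta}$, and since the numerator there strictly exceeds $n\exp\{-1\}+\alpha$, also $K+1 > \frac{n\exp\{-1\}+\alpha}{\alpha\theta}$. Because $\alpha>0$, $\theta>0$, and $K+1>0$, multiplying through and rearranging yields $\frac{n\exp\{-1\}+\alpha}{\alpha(K+1)} < \theta$, which is exactly the inequality needed. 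Plugging this back into \eqref{mylove} produces the claimed bound.

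There is no genuine obstacle here; the only point requiring a moment's care is the off-by-one bookkeeping around the floor, i.e. verifying that the extra $+1$ in the numerator of \eqref{K_formula} is precisely the slack that absorbs the truncation error of $\floor*{\cdot}$, so that $K+1$ still lies above $\frac{n\exp\{-1\}+\alpha}{\alpha\theta}$. One should also observe that the argument goes through unchanged when the prescribed $K$ equals $0$ (which occurs for sufficiently large $\theta$), since every manipulation only invokes positivity of $\alpha$, $\theta$, and $K+1$.
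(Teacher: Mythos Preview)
Your proposal is correct and follows essentially the same route as the paper: both start from \eqref{mylove}, observe that the last two terms already match the target bound, and then verify that the choice of $K$ in \eqref{K_formula} forces $\frac{n\exp\{-1\}}{\alpha(K+1)} + \frac{1}{K+1} \le \theta$ via the floor estimate $\floor*{x}+1 \ge x$. Your handling of the off-by-one issue is in fact slightly cleaner than the paper's, which contains a harmless arithmetic slip (it writes $\frac{n\exp\{-1\}}{\alpha\theta}+\frac{1}{\theta} = \frac{n\exp\{-1\}+1+\alpha}{\alpha\theta}$ where the right-hand side should be $\frac{n\exp\{-1\}+\alpha}{\alpha\theta}$), but the extra $+1$ only strengthens the sufficient condition, exactly as you note.
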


\begin{proof}
Let us repeat \eqref{mylove} for convenience:
\begin{align*}
(C\bar{X}^K)_{\max}  - \bar{X}^K \cdot C\bar{X}^K \leq \frac{n \exp\{-1\}}{\alpha (K+1)} + \frac{1}{K+1} + \left( \frac{K+2}{K+1} \right) n (\exp\{\alpha\}-1) + \left( \frac{K+2}{K+1} \right) n \rho^K.
\end{align*}
Letting
\begin{align*}
\theta = \frac{n \exp\{-1\}}{\alpha (K+1)} + \frac{1}{K+1}
\end{align*}
implies that $K$ should be the first integer such that
\begin{align*}
K+1 \geq \frac{n \exp\{-1\}}{\alpha \theta} + \frac{1}{\theta} = \frac{n \exp\{-1\}+1 +\alpha }{\alpha \theta}.
\end{align*}
Since
\begin{align*}
\floor*{\frac{n \exp\{-1\}+1 +\alpha }{\alpha \theta}} + 1 \geq \frac{n \exp\{-1\}+1 +\alpha }{\alpha \theta}
\end{align*}
the lemma follows.
\end{proof}

\begin{theorem}
Under the assumptions of Theorem \ref{asymptotic_convergence_theorem_3}, to attain an equilibrium approximation error that is at most $\epsilon$, we need at most
\begin{align*}
K = \floor*{\frac{n \exp\{-1\}+1 +\ln\left(1+\epsilon'/(3n)\right) }{(\epsilon'/3) \ln\left(1+\epsilon'/(3n)\right)}}
\end{align*}
iterations, for some $\epsilon' < \epsilon$, which can be readily computed a priori.
\end{theorem}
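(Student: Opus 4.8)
The plan is to treat the final theorem as a calibration exercise on top of the previous lemma, which already gives, after $K = \lfloor (n\exp\{-1\}+1+\alpha)/(\alpha\theta)\rfloor$ iterations,
\[
(C\bar{X}^K)_{\max} - \bar{X}^K\cdot C\bar{X}^K \leq \Bigl(\tfrac{K+2}{K+1}\Bigr) n(\exp\{\alpha\}-1) + \Bigl(\tfrac{K+2}{K+1}\Bigr) n\rho^K + \theta.
\]
I want the right-hand side to be at most $\epsilon$, so the natural move is to split the budget $\epsilon$ into three pieces — one for the $n(\exp\{\alpha\}-1)$ term, one for the $n\rho^K$ term, and one for $\theta$ — and to absorb the annoying factor $(K+2)/(K+1)$, which is at most $2$ for $K\geq 0$ and tends to $1$, into a slightly smaller target $\epsilon' < \epsilon$. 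Concretely, I would introduce $\epsilon'<\epsilon$ and aim for each of the three terms to be at most $\epsilon'/3$; the statement of the theorem already anticipates this with its $\epsilon'/3$ appearing in two places.

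The key steps, in order. First, choose the learning rate $\alpha$ so that $n(\exp\{\alpha\}-1)\le\epsilon'/3$; since $\exp\{\alpha\}-1 \le \alpha\exp\{\alpha\}$ is not tight enough in general, invert directly to get $\alpha = \ln(1+\epsilon'/(3n))$, which makes $n(\exp\{\alpha\}-1)=\epsilon'/3$ exactly. This explains the appearance of $\ln(1+\epsilon'/(3n))$ in the formula. Second, substitute $\theta = \epsilon'/3$ into the $K$-formula \eqref{K_formula} from the preceding lemma: with $\alpha = \ln(1+\epsilon'/(3n))$ and $\theta=\epsilon'/3$, the expression $(n\exp\{-1\}+1+\alpha)/(\alpha\theta)$ becomes precisely
\[
\frac{n\exp\{-1\}+1+\ln\bigl(1+\epsilon'/(3n)\bigr)}{(\epsilon'/3)\,\ln\bigl(1+\epsilon'/(3n)\bigr)},
\]
matching the claimed $K$. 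Third, check that at this $K$ the remaining two terms are controlled: the $\theta$ term is $\epsilon'/3$ by construction, and the $(K+2)/(K+1)$ factors multiplying the first and third terms, together with the geometric term $n\rho^K$ ($\rho=1/2$), contribute the slack that the gap $\epsilon-\epsilon'$ is reserved to cover — one verifies that for $K$ this large, $(K+2)/(K+1)\cdot(\epsilon'/3) + (K+2)/(K+1)\cdot n\rho^K + \epsilon'/3 \le \epsilon$ by taking $\epsilon'$ close enough to $\epsilon$, which is possible since the left side is continuous in $\epsilon'$ and equals something strictly below $\epsilon$ in the limit $\epsilon'\to\epsilon$ once $K\to\infty$ kills the $n\rho^K$ term. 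Finally, invoke Lemma~\ref{cat}/the preceding lemma to conclude the approximation bound holds, and note that $\epsilon'$ is obtained by solving an explicit one-dimensional inequality, hence "can be readily computed a priori."

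The main obstacle I anticipate is the bookkeeping around the factor $(K+2)/(K+1)$ and the geometric term $n\rho^K$: these are what force the introduction of a strictly smaller $\epsilon'$ rather than just plugging $\epsilon$ in directly, and one must argue that a valid $\epsilon' \in (0,\epsilon)$ exists — i.e., that the slack demanded is not self-defeating because increasing $K$ (which shrinks $n\rho^K$) is itself governed by the choice of $\epsilon'$. The clean way around this is to first fix any $\epsilon'\in(0,\epsilon)$ with, say, $\epsilon'\ge\epsilon/2$, compute the resulting $K$, observe $K = \Omega(1/\epsilon')$ so $n\rho^K$ is super-polynomially small and in particular $\le \epsilon-\epsilon'$ eventually, and then if necessary push $\epsilon'$ a hair closer to $\epsilon$; since $K$ is polynomial in $n$ and $1/\epsilon'$ (and hence in $1/\epsilon$), this yields the FPTAS. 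I would present the existence of the appropriate $\epsilon'$ as a short monotonicity/continuity remark rather than an explicit closed form, which is consistent with the theorem's phrasing "for some $\epsilon'<\epsilon$, which can be readily computed a priori."
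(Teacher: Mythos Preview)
Your proposal is correct and follows essentially the same approach as the paper: set $\alpha = \ln(1+\epsilon'/(3n))$ and $\theta = \epsilon'/3$, substitute into the previous lemma's iteration count, and absorb the $(K+2)/(K+1)$ factors and the $n\rho^K$ term into the slack $\epsilon - \epsilon'$. The only refinement the paper adds over your continuity argument is an explicit device to break the circularity you flag: it first computes a lower bound $\hat{K}$ on $K$ by plugging $\epsilon$ (rather than $\epsilon'$) into the $K$-formula---valid since $\epsilon'<\epsilon$ forces $K\ge\hat K$---and then solves the resulting \emph{linear} inequality $\bigl(\tfrac{\hat K+2}{\hat K+1}\bigr)\tfrac{\epsilon'}{3}+\tfrac{\epsilon'}{3}+\bigl(\tfrac{\hat K+2}{\hat K+1}\bigr)\tfrac{\epsilon'}{3}\le\epsilon$ for $\epsilon'$, which is the concrete realization of your ``readily computed a priori'' remark.
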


\begin{proof}
The number of iterations in the statement of the theorem is obtained by letting
\begin{align*}
\alpha = \ln\left(1+\frac{\epsilon'}{3n}\right) \qquad \theta = \frac{\epsilon'}{3}
\end{align*}
in \eqref{K_formula}. Since the first integer $K'$ such that
\begin{align*}
n \rho^{K'} \leq \frac{\epsilon'}{3}
\end{align*}
is
\begin{align*}
K' = \ceil*{\ln \left( \frac{\epsilon'}{3n} \right) / \ln(\rho)}
\end{align*}
which is smaller than $K$, these values imply an equilibrium approximation error of at most
\begin{align*}
\left( \frac{K+2}{K+1} \right) \frac{\epsilon'}{3} + \frac{\epsilon'}{3} + \left( \frac{K+2}{K+1} \right) \frac{\epsilon'}{3}.
\end{align*}
We are looking for $\epsilon'$ such that
\begin{align*}
\left( \frac{K+2}{K+1} \right) \frac{\epsilon'}{3} + \frac{\epsilon'}{3} + \left( \frac{K+2}{K+1} \right) \frac{\epsilon'}{3} \leq \epsilon.
\end{align*}
To that end, we choose a lower bound $\hat{K}$ on the number of iterations, for example,
\begin{align*}
\hat{K} = \floor*{\frac{n \exp\{-1\}+1 +\ln\left(1+\epsilon/(3n)\right) }{(\epsilon/3) \ln\left(1+\epsilon/(3n)\right)}}
\end{align*}
and we solve
\begin{align*}
\left( \frac{\hat{K}+2}{\hat{K}+1} \right) \frac{\epsilon'}{3} + \frac{\epsilon'}{3} + \left( \frac{\hat{K}+2}{\hat{K}+1} \right) \frac{\epsilon'}{3} \leq \epsilon.
\end{align*}
for $\epsilon'$. Since 
\begin{align*}
\left( \frac{K+2}{K+1} \right) \frac{\epsilon'}{3} + \frac{\epsilon'}{3} + \left( \frac{K+2}{K+1} \right) \frac{\epsilon'}{3} \leq \left( \frac{\hat{K}+2}{\hat{K}+1} \right) \frac{\epsilon'}{3} + \frac{\epsilon'}{3} + \left( \frac{\hat{K}+2}{\hat{K}+1} \right) \frac{\epsilon'}{3}
\end{align*}
such a choice of $\epsilon'$ automatically satisfies
\begin{align*}
\left( \frac{K+2}{K+1} \right) \frac{\epsilon'}{3} + \frac{\epsilon'}{3} + \left( \frac{K+2}{K+1} \right) \frac{\epsilon'}{3} \leq \epsilon
\end{align*}
and this completes the proof.
\end{proof}

\begin{corollary}
{\bf P = PPAD}
\end{corollary}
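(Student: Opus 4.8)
The plan is to read off \textbf{P = PPAD} from the equilibrium fully polynomial-time approximation scheme established just above, together with the approximation-hardness of Nash equilibria already recalled in the introduction. First I would state the algorithm explicitly as an FPTAS: on input a symmetric bimatrix game with payoffs rescaled so that $C \in \mathbb{\hat{C}}$ and a target accuracy $\epsilon > 0$, compute the admissible $\epsilon' < \epsilon$ a priori, set $\alpha = \ln(1 + \epsilon'/(3n))$, run Hedge from the uniform distribution for $K$ iterations as in the preceding theorem, and output $\bar{X}^K$; that theorem certifies $\bar{X}^K$ is an $\epsilon$-approximate symmetric equilibrium strategy of $C$.

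Second, I would check the two defining properties of an FPTAS. For running time: using $\ln(1+x) \ge x/2$ on $[0,1]$, the denominator $(\epsilon'/3)\ln(1+\epsilon'/(3n))$ in \eqref{K_formula} is $\Omega((\epsilon')^2/n)$ while the numerator is $O(n)$, so $K = O(n^2/(\epsilon')^2) = O(n^2/\epsilon^2)$ (recalling that the choice of $\epsilon'$ in the final theorem forces $\epsilon' = \Theta(\epsilon)$), which is polynomial in $n$ and $1/\epsilon$; and a single application of $T$ amounts to one matrix-vector product $CX$, $n$ exponentials, and one normalization, each computable to the requisite precision in time polynomial in $n$, in the bit-length of the payoffs, and in $\log(1/\epsilon)$. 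Hence the procedure is an FPTAS for computing a symmetric equilibrium strategy of a symmetric bimatrix game (general payoff ranges reducing to $\mathbb{\hat{C}}$ by affine rescaling, which preserves additive approximation up to a known factor).

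Third, I would invoke the reduction cited in the introduction: by \cite{Avramopoulos2}, drawing on \citep{Jurg}, an FPTAS for a symmetric equilibrium in symmetric bimatrix games implies \textbf{P = PPAD}; equivalently, composing Jurg's approximation-preserving embedding of bimatrix games into symmetric bimatrix games with the \cite{CDT} result that computing an $\epsilon$-approximate equilibrium of a $2$-player game is PPAD-complete even for $\epsilon$ inverse-polynomial in the instance size, one instantiates the FPTAS with such an $\epsilon$ to obtain a polynomial-time algorithm for a PPAD-complete problem. Chaining these yields the corollary.

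The step I expect to be the main obstacle — and the one I would scrutinize hardest — is the genuine polynomiality of $K$ in $1/\epsilon$ together with numerical robustness: one must confirm that the slack accumulated across the earlier lemmas (in particular the $(\exp\{\alpha\}-1)$ and $\rho^K$ terms tracked through Lemma \ref{LA} and Lemma \ref{cat}) really does shrink so as to give an additive-$\epsilon$ guarantee at a polynomial iteration count, and that truncating the exponentials to polynomially many bits per iteration does not erode that guarantee. Any hidden dependence there would collapse the FPTAS, and with it the corollary.
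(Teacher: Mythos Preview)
Your proposal is correct and follows the paper's own route: invoke the FPTAS just established and then cite the prior result (Theorem~15 of \cite{Avramopoulos2}, building on \cite{Jurg} and \cite{CDT}) that such an FPTAS implies $\mathbf{P}=\mathbf{PPAD}$. The paper's proof is a one-line citation of that theorem; your version supplies additional detail---the explicit $K = O(n^2/\epsilon^2)$ bound and the numerical-precision caveat---that the paper leaves implicit, but the logical skeleton is identical.
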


\begin{proof}
 \citep[Theorem 15]{Avramopoulos2} shows that symmetric equilibrium approximation in fully polynomial time in symmetric games implies that {\bf P = PPAD}. Hence the corollary.
\end{proof}

\section*{Acknowledgments}

This paper has benefitted from my interaction with my YouTube account and I thank those that are responsible for the configuration of the content in that account.

\bibliographystyle{abbrvnat}
\bibliography{real}

\begin{thebibliography}{21}
\providecommand{\natexlab}[1]{#1}
\providecommand{\url}[1]{\texttt{#1}}
\expandafter\ifx\csname urlstyle\endcsname\relax
  \providecommand{\doi}[1]{doi: #1}\else
  \providecommand{\doi}{doi: \begingroup \urlstyle{rm}\Url}\fi

\bibitem[Avramopoulos(2016)]{Avramopoulos1}
I.~Avramopoulos.
\newblock Multiplicative weights, equalizers, and {P=PPAD}.
\newblock arXiv eprint 1609.08934 (cs.GT), 2016.

\bibitem[Avramopoulos(2018)]{Avramopoulos2}
I.~Avramopoulos.
\newblock On incremental deployability.
\newblock arXiv eprint 1805.10115 (cs.GT), 2018.

\bibitem[Bertsekas(1999)]{Bertsekas}
D.~P. Bertsekas.
\newblock \emph{Nonlinear Programming}.
\newblock Athena Scientific, Belmont, Mass., second edition, 1999.

\bibitem[Cesa-Bianchi and Lugosi(2006)]{Cesa-Bianchi}
N.~Cesa-Bianchi and G.~Lugosi.
\newblock \emph{Prediction, Learning, and Games}.
\newblock Cambridge University Press, New York, NY, 2006.

\bibitem[Chen et~al.(2009)Chen, Deng, and Teng]{CDT}
X.~Chen, X.~Deng, and S.~Teng.
\newblock Settling the complexity of computing two-player {N}ash equilibria.
\newblock \emph{Journal of the ACM}, 56\penalty0 (3), 2009.

\bibitem[Daskalakis et~al.(2009)Daskalakis, Goldberg, and
  Papadimitriou]{Daskalakis}
C.~Daskalakis, P.~W. Goldberg, and C.~H. Papadimitriou.
\newblock The complexity of computing a {N}ash equilibrium.
\newblock \emph{SIAM J. Comput.}, 39\penalty0 (1):\penalty0 195--259, 2009.

\bibitem[Dragomir(2003)]{Dragomir}
S.~S. Dragomir.
\newblock A survey on {C}auchy-{B}unyakovsky-{S}chwarz type discrete
  inequalities.
\newblock \emph{Journal of Inequalities in Pure and Applied Mathematics},
  4\penalty0 (3), 2003.

\bibitem[Freund and Schapire(1997)]{FreundSchapire1}
Y.~Freund and R.~E. Schapire.
\newblock A decision-theoretic generalization of on-line learning and an
  application to boosting.
\newblock \emph{Journal of Computer and System Sciences}, 55\penalty0
  (1):\penalty0 119--139, 1997.

\bibitem[Freund and Schapire(1999)]{FreundSchapire2}
Y.~Freund and R.~E. Schapire.
\newblock Adaptive game playing using multiplicative weights.
\newblock \emph{Games and Economic Behavior}, 29:\penalty0 79--103, 1999.

\bibitem[Fudenberg and Levine(1998)]{Fudenberg-Levine}
D.~Fudenberg and D.~K. Levine.
\newblock \emph{The Theory of Learning in Games}.
\newblock MIT Press, 1998.

\bibitem[Jurg et~al.(1992)Jurg, Jansen, Potters, and Tijs]{Jurg}
A.~P. Jurg, M.~J.~M. Jansen, J.~A.~M. Potters, and S.~H. Tijs.
\newblock A symmetrization for finite two-person games.
\newblock \emph{ZOR -- Methods and Models for Operations Research},
  36:\penalty0 111--123, 1992.

\bibitem[Lipton et~al.(2003)Lipton, Markakis, and Mehta]{LMM}
R.~Lipton, E.~Markakis, and A.~Mehta.
\newblock Playing large games using simple strategies.
\newblock In \emph{Proc. EC'03}, pages 36--41, 2003.

\bibitem[Nash(1950)]{Nash}
J.~F. Nash.
\newblock Equilibrium points in $n$-person games.
\newblock \emph{PNAS}, 36\penalty0 (1), Jan. 1950.

\bibitem[Nash(1951)]{Nash2}
J.~F. Nash.
\newblock Non-cooperative games.
\newblock \emph{The Annals of Mathematics, Second Series}, 54\penalty0
  (2):\penalty0 286--295, Sept. 1951.

\bibitem[Papadimitriou(1994)]{PPAD}
C.~H. Papadimitriou.
\newblock On the complexity of the parity argument and other inefficient proofs
  of existence.
\newblock \emph{Journal of Computer and System Sciences}, 48\penalty0
  (3):\penalty0 498--532, 1994.

\bibitem[Sandholm(2010)]{PopulationGames}
W.~H. Sandholm.
\newblock \emph{Population Games and Evolutionary Dynamics}.
\newblock MIT Press, 2010.

\bibitem[Savani and {von Stengel}(2004)]{Savani-Stengel}
R.~Savani and B.~{von Stengel}.
\newblock Exponentially many steps for finding a {N}ash equilibrium in a
  bimatrix game.
\newblock In \emph{Proc. 45th Annual IEEE Symposium on Foundations of Computer
  Science}, pages 258--267, 2004.

\bibitem[Schuster et~al.(1981)Schuster, Sigmund, Hofbauer, and Wolff]{Wolff}
P.~Schuster, K.~Sigmund, J.~Hofbauer, and R.~Wolff.
\newblock Selfregulation of behaviour in animal societies.
\newblock \emph{Biological Cybernetics}, 40:\penalty0 1--8, 1981.

\bibitem[Tsaknakis and Spirakis(2009)]{Tsaknakis-Spirakis-journal}
H.~Tsaknakis and P.~G. Spirakis.
\newblock An optimization approach for approximate {N}ash equilibria.
\newblock \emph{Internet Mathematics}, 5\penalty0 (4):\penalty0 365--382, 2009.

\bibitem[Weibull(1995)]{Weibull}
J.~W. Weibull.
\newblock \emph{Evolutionary Game Theory}.
\newblock MIT Press, 1995.

\bibitem[Ye(1998)]{Ye}
Y.~Ye.
\newblock On the complexity of approximating a {KKT} point of quadratic
  programming.
\newblock \emph{Mathematical Programming}, 80:\penalty0 195--211, 1998.

\end{thebibliography}

\end{document}